\newcommand{\gp}[1]{\textcolor{blue}{gp: #1}}
\newtheorem{thm}{Theorem}
\renewcommand{\ALG@name}{Procedure}
\renewcommand\fnum@algorithm{\fname@algorithm~\thealgorithm.}
\let\sv@thm\@thm
\def\@thm{\let\indent\relax\sv@thm}
\def\old@comma{,}
    \old@comma\discretionary{}{}{}%
\begin{document}

\begin{frontmatter}

\title{As Time Goes By: Adding a Temporal Dimension\\ Towards Resolving Delegations in Liquid Democracy}

\author[A,B]{\fnms{Evangelos}~\snm{Markakis}
}
\author[A]{\fnms{Georgios}~\snm{Papasotiropoulos}
} 

\address[A]{\small Athens University of Economics and Business, Athens, Greece}
\address[B]{\small Input Output Global (IOG)}

\begin{abstract}
{\footnotesize{
In recent years, the study of various models and questions related to Liquid Democracy has been of growing interest among the community of Computational Social Choice. A concern that has been raised, is that current academic literature focuses solely on static inputs, concealing a key characteristic of Liquid Democracy: the right for a voter to change her mind as time goes by, regarding her options of whether to vote herself or delegate her vote to other participants, till the final voting deadline. In real life, a period of extended deliberation preceding the election-day motivates voters to adapt their behaviour over time, either based on observations of the remaining electorate or on information acquired for the topic at hand. By adding a temporal dimension to Liquid Democracy, such adaptations can increase the number of possible delegation paths and reduce the loss of votes due to delegation cycles or delegating paths towards abstaining agents, ultimately enhancing participation. Our work takes a first step to integrate a time horizon into decision-making problems in Liquid Democracy systems. Our approach, via a computational complexity analysis, exploits concepts and tools from temporal graph theory which turn out to be convenient for our framework.}}

\end{abstract}

\end{frontmatter}

\section{Introduction}
\label{sec:intro}
Liquid Democracy (LD) is a novel voting framework that aspires to revolutionize the typical voter's perception of civic engagement and ultimately elevate both the quantity and quality of community involvement. At its core, LD is predicated on empowering voters to determine their mode of participation. This can be achieved by either casting a vote directly, as in direct democracy, or by entrusting a proxy to act on their behalf, as in representative democracy. Notably, delegations are transitive, meaning that a delegate's vote can be delegated afresh, and at the end of the day a voter that has decided to cast a ballot, votes with a weight dependent on the number of agents that she
represents, herself included. As a result of its flexibility, LD is alleged to reconcile the appeal of direct democracy with the practicality of representative democracy, yielding the best of both worlds. The origin of the ``liquid'' metaphor remains a matter of debate up to date, with one view being that it stems from the ability of votes to flow along delegation paths, while an alternative view argues that it arises from the ability of voters to revoke delegation approvals and continuously adjust their choices. 
 As we will justify shortly after, current work tends to forcefully support the second opinion.

According to \cite{Zuber} there is a number of features that suffice to establish a framework as a Liquid Democracy one. Most of them are related to the transitivity property and to the options given to the voters about casting a ballot or choosing representatives. These are more or less taken into account in all relevant works that come from the field of Computational Social Choice. A further aspect, 
called \textit{Instant Recall}, encompasses the ability of voters to withdraw their delegation at any time. As a matter of fact, in practice, elections allow for extended (sometimes structured) periods of deliberation, until the votes are finalized, and Liquid Democracy could serve as a means of debate empowerment. A revocation of delegation may occur due to disagreements with a delegate's post-delegation choices, doubts on the integrity of one's behavior, or an agent's further understanding of the issue under consideration. A characteristic that is being shared by all the works in the AI community is that they all seem to ignore the Instant Recall feature, and examine isolated static delegation profiles. This oversight was identified and criticized by the team behind the LiquidFeedback platform \cite{liquidfeedback}, the most influential and large scale experiment of LD. In \cite{Nitsche}, inter alia, they claim the following:

\begin{adjustwidth}{0.27cm}{0.27cm}
\textit{In a governance system with a continuous stream of decisions, we expect that participants observe the actions (and even non-actions) of other participants, in particular the activities of their (direct and indirect) delegates as well as the activities of other participants, who they consider as delegates. Based on their observations, we expect participants to adapt their own behaviour in respect to setting, changing, and removing delegations and their own participation. Based on the track records of the participants, a network of trust or dynamic scheme of representation proves itself to be a responsible power structure. [...] We believe that the effects that occur through observation and adaptation over time are an essential prerequisite for a comprehensive understanding of liquid democracy, (which) requires a broader view, namely adding a temporal dimension to delegation models.}
\end{adjustwidth}

Leaving aside the lack of temporal aspects in the literature, there are also additional concerns to address in traditional LD models. A crucial disadvantage is that we may experience delegation cycles or delegation paths towards abstainers, which result to inevitably lost votes. A way that has been suggested in theory \cite{goelz,kotsialou2018incentivising,brill2022liquid} and has been implemented in practice \cite{google}, in order to mitigate such issues is to allow each delegating agent to specify an entire set of agents she approves as potential representatives together with a ranking among them that indicates her preferences. Nevertheless, even with these efforts, the discussed issues may still arise at the election-day.
And here is where the temporal dimension can come into play! The main focus of our work is in proposing a framework that leverages temporal information to address the identified concerns, while also providing a valuable tool for deliberation.


In particular, our aim is to study the existence of efficient \textit{delegation rules} that fulfill certain desirable axioms. A delegation rule is a centralized algorithm that takes as input the available information of the deliberation phase and prescribes for each non-abstaining participant a delegation path to a voter who casts a ballot. The main properties that we would like our rules to satisfy are described below.  

\smallskip \noindent {\bf Time-Conscious Delegation Rules.} We view the temporal dimension as an important feature in the design of delegation rules. To demonstrate this, consider an election where the information at the very end of the deliberation phase can only produce a path to a cycle or to an abstainer, for some voter. Our main insight is that one way to resolve such scenarios 
is to look into approvals expressed during the previous time steps of the deliberation phase. Our work operates under the premise that if a voter $v$ decides to trust another voter $u$, at a given moment in time, say $t$, then $v$ accepts any decisions made by $u$ at time $t$ or earlier (up to a certain number of time steps prior to $t$, which could be given as a parameter by voter $v$). This is because the decision to approve a delegation to $u$ is based on what $v$ observes in the previous time steps and up until time $t$. However, voter $v$ still retains the right to revoke her approval to $u$ at a later point in time. If this occurs, then voter $u$ is permitted to represent $v$ only if she chooses an action that she had declared at or before time $t$. We refer to the rules that produce delegation paths respecting in such a way the ordering of the time-instants at which a delegation is made available, as \textit{time-conscious} (for a formal definition refer to Section \ref{sec:prelims1}). In our model, time-conscious delegation rules can guarantee the absolute approval of a delegating voter to her ultimate representative. 

\smallskip \noindent {\bf Confluent Delegation Rules.} In models incorporating multiple, ranked, delegations, as the one under consideration, an esteemed property is \textit{confluence}, which posits that each voter should have at most one other immediate representative in the final outcome \cite{brill2022liquid}. This desirable attribute guarantees that every voter is instructed to take a single action among the three options: vote, abstain, or delegate her own and all received ballots to a specific voter. On the contrary, a non-confluent rule may 
prompt a voter to delegate different ballots received from different voters to different representatives (and even delegate her own ballot to yet another representative). Such suggestions can be challenging for a voter to follow. In addition to its intuitive nature, confluence is also significant for maintaining transparency and preserving the high level of accountability inherent in classical Liquid Democracy, as highlighted in \cite{goelz}.

\subsection{Contribution}
Conceptually, we view as our main contribution that we explicitly add a temporal dimension to (a generalization of) an existing framework. Hence this is putting a stake in the ground in bridging a significant research gap identified by practitioners. 
We then study the compatibility of computational tractability with desirable properties of delegation rules, with the objective of reducing the loss of votes resulting from delegation cycles or paths towards abstaining agents, and ultimately enhance the electorate's participation.
Namely we are interested in polynomially computable rules that maximize the total utility of the electorate and at the same are time-conscious and confluent. Unfortunately, despite the natural appeal of these requirements, it turns out that this is too much to ask for: our results demonstrate that such a delegation rule does not exist, unless P=NP, even for simple variants of our model.
Therefore, the best one could hope for is to design efficient procedures that sacrifice one of the considered axioms. Alternatively, one can also study more restricted models in which all the desired properties can be simultaneously satisfied. Indeed, we offer positive results in both directions, circumventing the NP-hardness results in several cases.
Finally, we believe that our work is making a pioneering contribution to the Computational Social Choice literature, by incorporating concepts and techniques from temporal graph theory, which is a novel approach in the field.

\subsection{Related Work} 
We discuss first some works related to Liquid Democracy, in a way that ``begins at the beginning'', following the suggestion made to the White Rabbit in ``Alice's Adventures in Wonderland''. The author of that novel, Charles Dodgson (also known by his pen name Lewis Carroll), 
as early as 1884 \cite{carroll1884principles}, considers an idea that was meant to be of vital importance for what we call Liquid Democracy today. According to \cite{origins}, it seems that he is the one that, before all else, discussed the aspect of giving the agents the power to transfer to others their acquired votes. On the other hand, it was Gordon Tullock \cite{tullock} who initiated the discussion about models that aspire to occupy the ground between direct and representative democracy, by suggesting a model that allows voters to decide whether they are interested in casting a ballot or delegate to another voter. Shortly after, unlike Tullock's suggestion, James Miller \cite{miller}, brought forward the idea that voters should not only choose their mode of participation but should also enjoy the ability to retract a previously given delegation in a day-to-day basis. At what concerns the nomenclature of LD, the precise origins are unknown. The best one could refer to, is its seemingly first \cite{paulin14} recorded appearance (in an obsolete wiki, preserved only on the Internet Archive \cite{archive1,archive2}), in which a user nicknamed ``sayke'' discoursed about a voting system that lies between direct and representative democracy and aims at increasing civic engagement. However, none of these sources discussed explicitly the aspect of transitivity of votes, as Dodgson did. Reinventions, amendments and compositions of these ideas started to appear at the early 00's and we refer to \cite{ford2014} for an overview of them. The earliest published works that incorporate the aspects of LD, 
(roughly) as we consider it today are \cite{ford,green,boldi,cohensius}. Nowadays, Liquid Democracy is one of the most active research areas in Computational Social Choice  \cite{brill2018interactive,paulin2020overview}.

As already mentioned, the primary motivation of our work is due to \cite{Nitsche} and the framework we suggest is a generalization of the model in \cite{brill2022liquid}. Furthermore, our optimization objective coincides with the one in \cite{escoffier,markakis2021approval}. To our knowledge, our work is the first that incorporates temporal aspects in LD models. 
Many different models and questions related to Liquid Democracy have been examined. 
Indicatively, recently published works explored aspects including, the study of voting power concentration through the lens of parameterized complexity \cite{dey2021parameterized}, the efficiency of altering delegations to achieve consistency in participatory budgeting settings \cite{jain}, the application of power indices and criticality analysis to voters \cite{colley2}, and the evaluation of LD's epistemic performance \cite{revel}.




\section{Temporal Liquid Democracy Elections} 
\label{sec:prelims1}
We consider elections in which a set $V$ of $n$ voters should reach a decision on a certain issue.
Apart from voting themselves, the participants are given two additional options: abstaining or delegating to other voters. The voters also have some time available to consider what to do (e.g. to get informed on the issue at hand or to observe other voters' choices) and they are allowed to change their mind, perhaps multiple times, until the actual election-day. We say that such an election is a \textit{Temporal Liquid Democracy Election}, a t-LD election in short, if it consists of two phases: 
\begin{itemize}
    \item A \textit{deliberation phase} of $L$ rounds, where at every time-instant $t\in [L]$, each voter $v$, has to choose whether to personally vote or not. If she decides to cast a ballot, we consider this as her final decision that will not change in the remaining time-steps. As long as a voter $v$ has not decided to cast a ballot herself till time $t$, she is asked to specify the following:
    \begin{itemize}
        \item[-] A set of approved voters $S_v^t\subseteq V\setminus\{v\}$ (which may be the empty set, if $v$ wants to abstain at round $t$), indicating the voters that she trusts to cast a ballot on her behalf, possibly with different levels of confidence. These voters may in turn also be willing to delegate their vote to other participants as well.
        \item[-] A (weak) preference ranking over the voters in $S_v^t$, which induces a partition of $S_v^t$ into preference groups, according to $v$.
        This is accompanied by a positive integer score $sc_v^t(i)$, indicating the utility or happiness level that $v$ experiences if a voter from her $i$-th most preferred group at time $t$, will ultimately be selected as her immediate representative.
        \item[-] A non-negative integer-valued trust-horizon parameter $\delta_v^t$, with $\delta_v^t\leq t-1$, by which, she indicates approval for the views held by any voter in $S_v^t$ up to $\delta_v^t$ time-steps prior to time-instant $t$. 
    \end{itemize}    
    \item A \textit{casting phase}, in which all the voters that, during the deliberation phase, expressed willingness to vote (and only these), eventually cast a ballot on the issue under consideration. Every voter who did not previously declare an intention to vote, is being assigned a representative and a pre-specified delegation rule, that takes into account the entire deliberation phase, is being used to make such decisions. The winner(s) of the election are elected using a weighted voting rule, where the weight of a voter is determined based on the number of voters that she represents. 
\end{itemize}

For an illustrative exposition of our model we refer to the example provided at the end of the section. We now elaborate on the input that is required from the voters, during the deliberation phase. The preference ranking facilitates voters to express different levels of confidence towards other participants who could potentially represent them. Also, the scoring function allows the model to capture the cases where a voter is willing to either increase her scores over rounds due to becoming gradually more informed about another voters' opinions 
or in the opposite direction, decrease scores due to becoming more hesitant about who represents her. Realistically, we expect voters to have just a few preference groups, and hence they do not need to submit too many numerical parameters. Furthermore, the intuition behind the trust-horizon parameters is that the decision of a voter $v$ to approve a delegation to $u$ at time $t$, can be based only on looking at the behavior of $u$ in the previous rounds and up until time $t$ of the deliberation phase. Since a voter $v$ may not agree with $u$ in all previous time steps, the parameter $\delta_v^t$ specifies that $v$ agrees with the choices made by $u$ at any preceding time that is no more than $\delta_v^t$ time-instants before $t$.
A simple case to have in mind is when $\delta_v^t=t-1$ (i.e., $v$ trusts whatever $u$ has chosen at any time in the past). 
If this property holds for every voter $v$ and for any $t\in [L]$, we say that the election profile is of \textit{retrospective trust}.
Finally, note that under the suggested model, the voters' declared sets, their preference orders, their scores and trust-horizon parameters may change arbitrarily between subsequent time-instants.

\smallskip \noindent \textbf{Variants of the Model and Practical Considerations.} 
Customization is key to the proposed decision-making model, which offers a range of possibilities to enhance its practicality and reproducibility. For instance, it would be more natural to allow a voter to specify a different time-horizon parameter for each of her approved representatives. Notably, our findings are not impacted by the assumption of a uniform trust-horizon for approved representatives. Furthermore, we have assumed that once a voter expresses the desire to cast a ballot, she will no longer change her opinion until the election-day. This assumption is justified by the fact that once a voter has committed to becoming more informed on the topic, participating in further deliberation is deemed redundant. Nonetheless, the assumption is made only for technical convenience and could be dropped. Moreover, although our aim is to examine the model in its fullest generality, we stress that in potential real-life implementations, the voters may not need to submit all the information that we have described in every round. In particular, the scoring function could be automatically generated by the system, given the (weak) ranking on $S_v^t$ submitted by each voter. E.g., one  could use the Borda-scoring function (as in \cite{brill2022liquid}), under which, at any time-instant, a voter assigns a score of 1 to her last preference group, a score of $2$ to her second to last group, etc) or any other appropriate method. We highlight that our model is a strict generalization of the model considered in \cite{brill2022liquid}, not only because of the temporal dimension but also because of the more general scoring functions that we allow. The trust-horizon parameter could also be pre-specified, so that the voters do not need to submit any information regarding it, either by assuming that the trust of every voter goes arbitrarily back in time or for a fixed number of steps prior to each approval. Finally, if voters have the same preferences for consecutive time-steps, they would not need to re-specify them.

\smallskip \noindent \textbf{Delegation Rules.} In the elections we consider, we essentially have three types of participants. We refer to the voters that declared intention to vote as \textit{casting voters}, and these will be the only voters who will indeed finally cast a ballot at the election-day. 
Furthermore, the non-casting voters that will abstain from the election are precisely those who do not approve anyone at the final time-step, e.g. a voter $v$ such that
$S_v^L=\emptyset$. We refer to such voters as \textit{abstaining voters}. Finally, the rest of the voters will be called \textit{delegating voters}. 
As evident from Section \ref{sec:intro}, and as will be further illustrated by the example at the end of this section, the temporal dimension could be considered valuable when the examination of the isolated instance at $t=L$ cannot produce a feasible solution (i.e. delegation cycles or paths towards abstainers are unavoidable) or its feasible solutions are not good enough. A delegation rule is a mechanism that 'resolves delegations' and addresses such problematic cases, or in other words, a procedure 
that ultimately assigns to each delegating voter, a casting voter, possibly via following some path of trust relationships. More formally, a delegation rule is a function that takes as input the voters' preferences, 
as reported during the entire deliberation phase of a t-LD election, and outputs a path to a casting voter,
for every delegating voter. 
A valid delegation rule should ask casting voters to vote, abstaining voters to abstain and should not suggest any delegation path towards an abstainer or introduce delegation cycles.

\smallskip \noindent \textbf{Temporal Graphs.} The driving force in our work is to model and analyze t-LD elections using principles from temporal graph theory. We start with a basic overview of the concept and the terminology of temporal graphs and following this, we will introduce some notation that we will use in the remainder. In high level, a temporal graph is nothing more than a simple, called \textit{static}, graph in which a temporal dimension is being added, i.e., a graph that may change over time. Frequently, a temporal (multi)graph is being expressed as a time-based sequence of static graphs.
For convenience, we will use an equivalent definition, under which, a (directed) temporal (multi)graph $G(V,E,\tau,L)$ is determined by a set of vertices $V$, a (multi)set of directed, temporal edges $E$, a discrete time-labelling function $\tau$ that maps every edge of $E$ to a subinterval of $[1,L]$, and a lifespan $L \in \mathbb{N}$. If the edges of $E$ are weighted according to a function $w:E\rightarrow \mathbb{N}$, then we say that $G$ is weighted. The interval $\tau(e)$, for an edge $e$, indicates that $e$ is available at the time-instants that belong to $\tau(e)$. We say that each edge $e$ is assigned an interval labeling $\tau(e)=[s_e,t_e]$, 
 (possibly, $s_e=t_e$ if the edge is available for a single time-instant) and by allowing $G$ to be a multigraph\footnote{We are using multigraphs instead of (simple) graphs merely for technical convenience, and we note that, alternatively, one could work with graphs by letting $\tau$ be a function that maps edges to a set of subintervals of $[1,L]$.
 } it is permitted for an edge to be present in multiple (disjoint) time-intervals. Unless otherwise stated, henceforth, by the term \textit{graph}, we denote a weighted directed temporal multigraph. For more details on temporal graphs we refer to a relevant survey \cite{michail2016introduction}, as well as to the fundamental and influential works \cite{kempe2000connectivity,kostakos2009temporal,mertzios2013temporal}.
The \textit{static variant} of a temporal graph is the static graph that emerges if we ignore the time-labels of its edges. 
We call a graph \textit{temporal directed tree rooted at vertex $r$} if its static variant contains a directed path towards $r$ from every other vertex and its undirected variant is a tree. 
A crucial concept for our work, in the context of temporal graphs, is the notion of time-conscious paths, that satisfy a monotonicity property regarding the temporal dimension of their edges. Consider a temporal graph $G(V,E,\tau,L)$, 
coupled with a tuple $\delta_v=(\delta_v^t)_{t\in [L]}\in \mathbb{N}^{[L]}$ for every vertex $v$ of $V$. Let also $\delta = (\delta_v)_{v\in V}$. We say that a path in $G$ from $v_1$ to $v_{k+1}$ is $\delta$-\textit{time-conscious} 
if it can be expressed as an alternating sequence of vertices and temporal edges  
$(v_i,(e_i, t_i),v_{i+1})_{i\in [k]}$, 
such that for every $i\in [k]$ it holds that $e_i =
(v_i, v_{i+1})\in E$, $t_i \in \tau(e_i)$ and for every $i\in [k-1]$ it holds that $t_i\geq t_{i+1} \geq t_i-\delta_{v_i}^{t_i}$. Similar notions have been applied to various domains including convenient flight connections detection \cite{wu2016efficient}, information diffusion \cite{huang2015minimum} and  infectious disease control through contact tracing \cite{temporal-walks}. In the remainder of Section \ref{sec:prelims1}, it will become more clear how this notion fits in our framework. We also call $\delta$-time-conscious, a temporal directed tree, rooted at a vertex $r$, if all its paths towards $r$ are $\delta$-time-conscious.
Finally, we conclude by noting that illustrative examples of some of the terminology discussed here, can be found at Appendix \ref{sec:appendix examples}.

\smallskip \noindent \textbf{Modelling t-LD Elections as Temporal Graphs.} The deliberation phase of a t-LD election can be modeled as a weighted directed temporal multigraph $G(V\cup\{\triangledown\},E,\tau,L,w,\delta)$ that is formed by
\begin{itemize}
    \item a vertex in $V$ for every voter of the electorate, as well as a special vertex $\triangledown$, connected only with the casting voters,
    \item a multiset $E$ of temporal edges that represent the approvals for delegation or ballot casting per round via a function $\tau$ that assigns a time-label to every edge,
    \item a lifespan $L$ that represents the duration of the deliberation phase,
    \item a function $w$ that assigns a weight to every edge $(v,u)$ of $E$, according to $sc_v^t$, provided that $t\in \tau((v,u))$,
        \item a vector $\delta$ that, for every voter $v$, contains a tuple $(\delta_v^t)_{t\in [L]}$, as declared by $v$ during the deliberation phase. For convenience, we allow $\delta$ to have some empty entries, corresponding to casting voters or to time steps during which the corresponding voter abstained.  
\end{itemize}

\noindent We note that if a casting voter had indicated preferences for potential representatives before deciding to cast a ballot, these preferences, and their corresponding edges, can be safely disregarded. More precisely, only the following two types of edges may exist: directed edges of the form $e=(v,u)$ for $v \in V\setminus C$ and $u\in V$ with $\tau(e)=[s_e,t_e]$, indicating that at any time-instant $t\in [s_e,t_e]$, voter $u$ belongs to $S_v^t$,
and directed edges $e=(v,\triangledown)$ for $v \in C$ with $\tau(e)=[s_e,L]$, indicating that from time $s_e$ and onwards, voter $v$ agrees to cast a ballot. 
Furthermore, we will proceed by assuming that the set of voters $V$ is implicitly partitioned into three sets, as has been explained before: the set of casting voters $C$, the set of abstaining voters $A$ and the set of delegating voters $D$. More formally, $C=\{v\in V: (v,\triangledown)\in E\}$, 
$A=\{v\in V\setminus C: L\notin \tau((v,u)), \text{for any } (v,u) \in E\}$ 
and $D=V\setminus (C\cup A)$. 
The weight function $w$ indicates the cardinal preferences of a voter, as implied by the scores that accompany her preference rankings during the deliberation phase. Additionally, 
for convenience, we set to zero the weights of edges $(v,u)$ such that $v$ corresponds to a casting or an abstaining voter. 
This choice can be justified by the upcoming discussion of the optimization objective in the ``Electorate's Satisfaction'' paragraph.
Given a graph $G(V\cup\{\triangledown\},E,\tau,L,w,\delta)$ that models a t-LD election, a delegation rule returns, for every delegating voter $v$, a weighted directed temporal path from $v$ to $\triangledown$. Such a path infers an assignment of every delegating voter to a casting one. 
A delegation rule is called \textit{efficient} if its output can be computed in polynomial time in the input size.

\smallskip \noindent \textbf{Axiomatic Principles.} We discuss here the axioms that constitute the main focus of our work, namely time-consciousness and confluence. We have extensively discussed these two properties in the analogous paragraphs of Section \ref{sec:intro} and we now formally define them using the framework of temporal graphs. 
Given a graph $G(V\cup \{\triangledown\},E,\tau,L,w,\delta)$ that models a t-LD election, a delegation rule is
\begin{enumerate}
    \item \textit{time-conscious}, if for every delegating voter $v$, the delegation path output for $v$ is a $\delta$-time-conscious directed temporal path,
    \item \textit{confluent}, if the union of the paths output for all the delegating voters is a directed temporal tree, rooted at vertex $\triangledown$, that spans the vertices of $V\setminus A$.
\end{enumerate}

\noindent The definition of time-consciousness guarantees that all paths suggested by the delegation rule satisfy the constraints imposed by the voters, regarding their trust-horizon parameters.
Hence, for any edge $(v, u)$ in an output path, $u$ must choose an action (edge) that she had declared at a time that was approved by $v$. The definition of confluence guarantees that for every delegating voter $v$, there is a unique path to a casting voter, that is intended to serve both $v$ and all voters who delegated to $v$. 
In Appendix \ref{sec:appendix examples} we provide examples of time-conscious and confluent solutions, for further illustration. 

\smallskip \noindent \textbf{Electorate's Satisfaction.}
We make the usual assumptions for Liquid Democracy models that (a) voters completely trust their representatives and (b) trust between voters is transitive. This implies that if voter $v$ accepts voter $u$ as her potential representative, she concurs with any subsequent choice made by $u$ and also extends trust to any voter $w$ who may be entrusted by $u$. Hence, we note that the utility experienced by a delegating voter from a delegation rule can be considered as a local one, being contingent solely on the voter's immediate representative and not influenced by further choices made by the chosen representative. Hence the utility of a delegating voter, under a delegation rule, can be determined by the score that she declared for her immediate representative, specified by the rule.
Note that two different time-instants $t,t'$ may exist such that $u \in S_v^t \cap S_v^{t'}$. In these cases, given that the output of a delegation rule is a set of temporal paths, if the rule suggests a delegation from $v$ to $u$, it also explicitly specifies the time-instant at which the delegation will occur, say e.g. at time $t'$ and, thereby the utility of $v$ is equal to $sc_v^{t'}(i)$, if $u$ belongs to the $i$-th most preferred group of $v$, at time $t'$. Regarding now the casting voters, we do not take into account their utility since their will to cast a ballot has been realized; we do the same for abstaining voters. We consider as infeasible every solution that asks a casting (resp. abstaining) voter to delegate her ballot or abstain (resp. vote), and hence, our focus will be on the welfare of the delegating voters.
Finally, the quality of a rule is assessed by the total satisfaction it elicits from the electorate which is expressed as the sum of utilities of all delegating voters.
Our optimization objective then is to maximize the electorate's satisfaction, as defined by the \textsc{resolve-delegation} problem below.



\begin{table}
	\centering
	\begin{tabular}{lp{6.4cm}}  
		\toprule
	 \multicolumn{2}{c}{\textsc{resolve-delegation}} \\
		\midrule
\textbf{Instance:} & A graph $G(V\cup\{\triangledown\},E,\tau,L,w,\delta)$ that represents the deliberation phase of a t-LD election.
\\
\textbf{Goal:} & Compute a weighted directed temporal path from each delegating voter to $\triangledown$, with the aim of maximizing the total utility derived from the delegating voters, defined as the sum of the weights of the paths' first edges.\\
		\bottomrule
	\end{tabular}
\end{table}

\smallskip \noindent \textbf{Example.} As an illustration, consider the following instance of a t-LD election with 5 rounds and 6 voters, namely Alice, Bob, Charlie, Daisy, Elsa, and Fred. Their preferences are outlined below:
\begin{itemize}
\item Alice initially intended to delegate to Charlie. In the second round, she decided to get informed about the considered issue and vote.
\item Bob did not participate in the deliberation phase during the first round, but approved Alice in the second round. In the third round, he revoked his approval of Alice and instead approved Chris and Elsa. Bob's approval of Elsa remained until the final round.
\item Charlie approved Alice only in the beginning of the election. He also approved Bob in the first and third round, but removed his approval (and abstained) in the second round. In the fourth round, Charlie approved both Daisy and Fred, but he removed his approval of Daisy in the final round.
\item Daisy expressed interest in being a casting voter from the beginning until the end of the deliberation phase.
\item Although Elsa intended to delegate her ballot to Fred at certain times, ultimately both refrained from participating in the election.
\end{itemize}



        \setlength{\columnsep}{6pt}
\begin{wrapfigure}[17]{r}{-4cm}
        \raisebox{0pt}[\dimexpr\height-1.7\baselineskip\relax]{\includegraphics[scale=0.375]{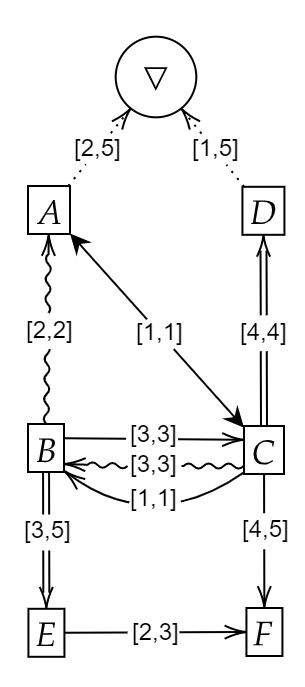}}%
\end{wrapfigure}

\noindent The described instance can be visualized using the graph shown in the side figure. We assume that $\delta_B^t=1$ and $\delta_C^t=t-1$ for every $t\in \{1,2,\dots,5\}$. The scores assigned by the voters to their approved representatives are encoded by the form of the edges, where curly edges have weight $1$, straight edges have weight $2$, and double-lined edges have weight $3$. Dotted edges indicate the casting voters. The labels of the edges represent the time-intervals of their presence. In this instance, Alice and Daisy form the set of casting voters, while Elsa and Fred abstain. Therefore, edge $(A,C)$ can be removed, since Alice will definitely cast a ballot. In a $\delta$-time conscious solution, Bob would not delegate to Charlie, since no $\delta$-time-conscious path to $\triangledown$ 
using the edge $(B, C)$ exists, for instance, edge $(C,A)$ violates the time-horizon declared by Bob.
Similarly, Charlie would not delegate neither to Alice nor to Bob at time $1$. Since we do not allow Bob to delegate to an abstainer, he must delegate to Alice, whom she trusted at time $2$. Then, there are two possible outcomes for the delegation rule, depending on the choice made for Charlie.
The edge that maximizes Charlie's utility is $(C,D)$. Therefore, the optimal delegation rule that is both time-conscious and confluent, would suggest the set of paths $\{((C,D),(D,\triangledown)),((B,A),(A,\triangledown))\}$, achieving a total satisfaction score of $4$. Finally, in this example it is plainly evident how the temporal dimension comes to the rescue: if one were to focus solely on the snapshot taken at time $5$, disregarding the information garnered from the deliberation phase, the only option would be to ask Bob and Charlie to delegate to abstaining voters. 
Instead, our framework utilizes the information obtained throughout the deliberation phase to propose an outcome that avoids paths towards abstainers and delegating cycles.


  
\section{Computational Complexity of Resolving Delegations in t-LD elections}
\label{sec:hardness}

In this section we explore the compatibility of the axioms we have put forward from Sections 1 and 2, with efficient computation. We highlight that all the missing proofs from the paper can be found in the Supplementary Material of our work. Unfortunately, our first result shows that it is impossible to have polynomially computable utility maximizing delegation rules that satisfy simultaneously the axioms of time-consciousness and confluence, unless P=NP, even under simple and natural restrictions. Before stating the result, we discuss the types of instances for which we establish hardness. To begin with, it is expected that in real-life elections, voters tend to exhibit a relatively stable and consistent opinion over time, and do not revise their preferences numerous times during the deliberation phase, due to the effort it would require to gather and process new information. Similarly, it is reasonable to expect that due to limited cognitive capacity, the voters are only able to partition their accepted representatives into a few disjoint preference groups. The theorem that follows demonstrates that the computational intractability of \textsc{resolve-delegation} persists even when we limit the voters to changing their minds at most once during the deliberation phase and partitioning their accepted representatives into at most two groups, at each round. Furthermore, it holds even for instances of retrospective trust, and with Borda-scoring functions. Therefore, the primary takeaway is that incorporating temporal aspects in conjunction with natural requirements does come at a computational cost.

\begin{thm}
\label{thm:np-hard}
\textsc{resolve-delegation} in a time-conscious and confluent manner is NP-hard, even for profiles of retrospective trust and under the Borda-scoring function.
\end{thm}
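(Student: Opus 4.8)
The plan is to reduce from a canonical NP-complete problem---I will use \textsc{3-Sat}---to the decision version of \textsc{resolve-delegation} (``is there a time-conscious confluent solution of total utility at least $K$?''). The first simplification to exploit is that, under retrospective trust, the trust-horizon constraint $t_i \geq t_{i+1} \geq t_i-\delta_{v_i}^{t_i}$ collapses: since $\delta_{v_i}^{t_i}=t_i-1$, the lower bound becomes the vacuous $t_{i+1}\geq 1$, so $\delta$-time-consciousness is exactly the requirement that the chosen edge-times be \emph{non-increasing} as one moves along each delegation path towards $\triangledown$. Combined with confluence, a feasible solution is then an arborescence rooted at $\triangledown$ in which every delegating voter selects a single parent edge together with a time, the times are non-increasing towards the root, and (by the restriction to at most two groups together with Borda) each selected first edge contributes weight $1$ or $2$. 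Maximizing total utility is therefore equivalent to maximizing the number of delegating voters who are assigned a top-group representative.

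The lever that drives hardness is the interaction between this time-monotonicity and the single-parent (tree) structure: a voter whose top-group edge is available only at a \emph{late} time is pushed towards the leaves, because every child of it must then delegate at a time at least as large; conversely, a voter forced to lie close to the root can only use \emph{early} edges. Using this, I would build a variable gadget for each $x_i$ consisting of a voter with two equally-weighted top-group edges realizing a ``free'' binary choice between a \textsc{true}-branch (available at a late time) and a \textsc{false}-branch (available at an early time), so that the choice costs no utility but, through the non-increasing-time rule, fixes the set of positions the gadget may occupy in the tree. For each clause I would add a voter that can secure its weight-$2$ edge while still reaching $\triangledown$ within the non-increasing-time constraint if and only if at least one of its three literals has been set to the satisfying branch, and is otherwise forced onto a weight-$1$ edge. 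Throughout, each voter is designed to revise its declaration at most once and to partition its approvals into at most two groups per round (adding a dummy low-group edge where needed so that Borda assigns score $2$ to the intended top group), so the instance meets all the stated restrictions.

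Finally I would set the threshold $K$ to the value attained precisely when every clause gadget secures its top-group edge, and prove the two directions: a satisfying assignment of $\phi$ is translated edge-by-edge into a non-increasing-time arborescence of value $K$, and conversely any confluent time-conscious solution of value at least $K$ forces every variable gadget into one of its two intended branches and every clause gadget onto a satisfying literal, yielding a satisfying assignment. I expect the main obstacle to be the reverse direction together with gadget rigidity: because confluence only forbids a voter from having two parents but otherwise permits the tree to route in many ways, I must ensure the gadgets are rigid enough that no unintended arborescence---for instance one that threads a clause gadget through a variable gadget's opposite branch, or that exploits a time lying in both epochs of a once-changing voter---can reach value $K$ without corresponding to a genuine satisfying assignment. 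Verifying this rigidity under the simultaneous constraints of retrospective trust, the two-group/Borda weight restriction, and at most one change of mind is where the bulk of the technical work lies.
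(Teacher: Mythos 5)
Your reformulation of the problem is correct and matches the one the paper relies on: under retrospective trust the waiting-time lower bound is indeed vacuous, so a time-conscious confluent solution is precisely an arborescence into $\triangledown$ in which each non-root vertex picks one (edge, time) pair and times are non-increasing along every root-ward path, and with two Borda groups the objective counts how many delegating voters secure a weight-$2$ first edge. However, what you have written is a \emph{plan} for a reduction from \textsc{3-Sat}, not a reduction: the variable and clause gadgets are never specified, and the entire content of the proof lives in exactly the part you defer. In particular, the clause gadget must satisfy an ``iff'': the clause voter can take its weight-$2$ edge and still reach $\triangledown$ under the non-increasing-time constraint precisely when some literal was set to its satisfying branch. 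But confluence forces every non-abstaining vertex of every branch---chosen or not---to be spanned by the tree with some root-ward path, so the unchosen branch of a variable gadget always has a path to $\triangledown$; you must argue that no such path offers a time at which an unsatisfied clause voter can attach via its weight-$2$ edge, while a weight-$1$ fallback always exists so that unsatisfiability translates into a utility deficit rather than infeasibility of the wrong kind, and that no unintended arborescence reaches value $K$. You correctly flag this rigidity verification as ``the bulk of the technical work,'' but without it the correspondence between satisfying assignments and value-$K$ arborescences is not established in either direction, so the argument is not yet a proof.

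For contrast, the paper sidesteps gadget engineering entirely: it reduces from the known NP-hard \textsc{minimum temporal spanning tree} problem, whose solutions are time-respecting out-arborescences from a root with edge weights in $\{1,2\}$. Reversing every edge turns time-respecting root-to-vertex paths into non-increasing vertex-to-root paths (i.e., time-conscious paths under retrospective trust), replacing each weight $w'$ by $3-w'$ turns minimization into maximization, and adding dummy weight-$1$ edges towards a single abstainer makes every out-neighbourhood a genuine two-group Borda profile; the reduction is then a near-verbatim translation with $k=3(n-1)-k'$. Pursuing your route would in effect re-prove the hardness of that spanning-tree problem from scratch, which may well be possible but is considerably more work than the statement requires, and in its current state the key constructions are missing.
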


\begin{proof}

We provide here a description of the reduction and defer the proof to Appendix \ref{app:proofs}. Given a graph $G(V\cup \{\triangledown\},E,\tau,L,w,\delta)$ and a parameter $k$, we call $\Pi$ the decision variant of \textsc{resolve-delegation} in a time-conscious and confluent manner, which asks for the existence of a solution with total satisfaction at least $k$. 
At what follows, we provide a reduction to $\Pi$ from the NP-hard problem \cite{huang2015minimum} \textsc{minimum temporal spanning tree} (t-\textsc{mst}), which we formally define shortly. Before moving on to the definition of t-\textsc{mst}, we note that in temporal graph theory the term \textit{time-respecting} is being used to describe, a temporal path $(v_i,(e_i, t_i),v_{i+1})_{i\in [\ell]}$, 
such that for every $i\in [\ell]$, it holds that $e_i =
(v_i, v_{i+1})$, $t_i \in \tau(e_i)$, and $1\leq t_1\leq t_2 \leq \dots \leq t_{\ell} \leq L$ (also called ``journey'' or simply ``temporal'' in the related literature). We also refer to Appendix \ref{sec:appendix examples} for an example. The difference between time-respecting and $\delta$-time-conscious paths is that  the paths of the former type are formed by edges whose time-stamps are in non-decreasing order of visiting, in contrast to the paths of the latter type, whose edges have time-stamps in non-increasing order, and on top of that, satisfy a waiting-time constraint indicated by vector $\delta$.

In the t-\textsc{mst} problem, we are given a temporal graph $G'(V',E',\tau',L',w')$, as well as a root vertex $u_0'\in V'$ and an integer $k'$. We are asked for a directed temporal tree of $G'$, called $T'$, of edge set $E''$, that spans the vertices of $V'$ and that has a time-respecting path from $u_0'$ to every vertex of $V'$, such that $\sum_{e\in E''}w'(e)\leq k'$. Note that t-\textsc{mst} is NP-hard even for the case where $w'(e)\in \{1,2\}, \forall e\in E'$, and for every $v\in V'$ there exists a $u\in V'$, such that $L'\in \tau'((u,v))$. It is without loss of generality to assume that $u_0'$ has no in-coming edges in $E'$. Furthermore, the hardness holds for instances in which for any pair of vertices $u,v$ of the input graph $G'(V',E',\tau',L',w')$, either $(u,v)\notin E'$, or there are two copies, $e_1$ and $e_2$ of $(u,v)$ in the multiset $E'$. In the second case, it also holds that $\tau'(e_1)=[1,L'-1], \tau'(e_1)=[L',L']$ and that $w'(e_1)=2, w'(e_2)=1$.


Given such an instance $(G'(V',E',\tau',L',w'),u_0',k')$ of t-\textsc{mst} we create an instance $(G(V\cup \{\triangledown\},E,\tau,L,w,\delta),k)$ of $\Pi$ as follows:

\begin{itemize}
    \item let $L=L'$,
    \item for every vertex $u' \in V'$ we add a vertex $u \in V$,
    \item for every directed edge $(u',v')\in E'$ we add a directed edge $(v,u)$ such that $w(v,u)=3-w'(u',v')$ (recall that $w'(u',v')\in\{1, 2\}$) and $\tau((v,u))=\tau'((u',v'))$,
    \item we add a special vertex $\triangledown$ and a directed edge $e=(u_0,\triangledown)$ such that $w(e)=0$ and $\tau(e)=[1,L]$,  
    \item we add one more special vertex $a\in V$,
    \item for every $t\in [L]$ and $v\in V$ such that there exists in $E$ an out-going edge from $v$ at time $t$ of weight $2$ but not of weight $1$, we add an edge (called ``dummy'') $e=(v,a)$ such that $w(e)=1$ and $\tau(e)=[t,t]$,
    \item for every vertex $v\in V\setminus \{a\}$ that corresponds to a non-casting voter and for every $t\in [L]$, we set $\delta_v^t=t-1,$ 
    \item we set $k$ to be $3(n-1)-k'$.
\end{itemize}
For the remainder of the proof, we refer to Appendix \ref{app:proofs}. 
\end{proof}

We will now explore roads to circumvent the impossibility result of Theorem \ref{thm:np-hard}.
Our proposal is to relinquish either the necessity for efficiency or one of the axioms of time-consciousness and confluence, in hopes of solving \textsc{resolve-delegation}. Our findings show that this strategy proves successful for some of the problems that emerge, which highlights that
Theorem \ref{thm:np-hard}
is not devastating. Notably, most of the suggested procedures are simple enough and therefore are confirmed as strong contenders for practical applications.

    We begin with studying the easiest variant of \textsc{resolve-delegation} in which the requirement of time-consciousness is being disregarded. 
    This is mainly done for the sake of completeness since studying it requires overlooking the temporal dimension of the instance, which is the defining characteristic of our work. In order to solve efficiently \textsc{resolve-delegation} in a confluent but not necessarily time-conscious manner, the delegation rule can treat any input submitted by a voter at any time as if it was not subject to time-related constraints.  
Since confluence implies that the output should be a directed tree, and since the utility of each delegating voter is determined by its outgoing edge, then all edges of the tree with non-zero weight will contribute exactly once to the total satisfaction, and therefore, the objective is to find a (static) directed tree of maximum total weight, that is rooted at $\triangledown$ and spans the vertices of $V\setminus A$.
To solve this problem we leverage the well-known algorithm by Edmonds \cite{edmonds1967optimum} (also independently discovered in \cite{bock1971algorithm,chu1965shortest} and improved in \cite{tarjan1977finding}) for the directed analog of the classic \textsc{minimum spanning tree} problem.\footnote{To be noted that in \cite{brill2022liquid}, a confluent delegation rule, referred to as MinSum, has been proposed, under a more restricted voting framework compared to ours, and its polynomial time computability has been very recently established \cite{colley2022unravelling}, using an approach that is also based on Edmonds' algorithm.} In this problem, given a weighted directed static graph $G(V,E,w)$ and a designated vertex $r\in V,$ we are asked for a subgraph $T$ of $G$, the undirected variant of which is a tree, of minimum total cost, such that every vertex of $G$ is reachable from $r$ by a directed path in $T$. It is important to note that in our case, the paths we need to compute are towards a fixed vertex, rather than originate from it. To apply Edmonds' algorithm, we adjust graph $G$
to an appropriate graph $G'$,
as indicated by Procedure \ref{proc1}.

\begin{thm}
     Procedure \ref{proc1} solves \textsc{resolve-delegation} in a confluent manner, in polynomial time. 
\end{thm}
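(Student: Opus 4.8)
The plan is to reduce the confluent (but not necessarily time-conscious) version of \textsc{resolve-delegation} to the classic minimum-cost directed spanning tree (arborescence) problem, and then invoke Edmonds' algorithm as a black box. The key observation, already noted in the excerpt, is that once we drop time-consciousness, the temporal labels become irrelevant to feasibility: any edge a voter ever declared can be used at face value. Hence I would first collapse the temporal multigraph to its static variant. Since confluence forces the output to be a directed tree rooted at $\triangledown$ spanning $V\setminus A$, and since each delegating voter's utility is exactly the weight of its unique outgoing edge in that tree, maximizing total satisfaction is equivalent to finding a maximum-weight such tree in the static graph. The first step is therefore to argue this equivalence carefully: the total satisfaction equals the sum of weights of the tree edges (each non-abstaining, non-casting vertex contributes its single out-edge, and casting/abstaining vertices contribute zero by our weight convention), so maximizing satisfaction is the same as maximizing the total edge weight of a spanning arborescence.

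The second step is to handle two mismatches between our setting and Edmonds' standard formulation. First, Edmonds' algorithm computes a \emph{minimum}-cost arborescence, whereas we want \emph{maximum} weight; I would convert via the standard transformation, replacing each weight $w(e)$ by $W - w(e)$ for a sufficiently large constant $W$ (e.g.\ one more than the maximum edge weight), so that a minimum-cost tree in the transformed instance is a maximum-weight tree in the original, because every spanning arborescence on $V\setminus A$ has the same number of edges. Second, Edmonds' algorithm produces an \emph{out}-arborescence in which every vertex is reachable \emph{from} the root, whereas in our problem every delegating voter must reach $\triangledown$, i.e.\ we need an \emph{in}-arborescence oriented \emph{towards} the root. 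This is resolved by reversing all edge directions to build $G'$, running Edmonds to get an out-branching rooted at $\triangledown$, and reversing the result back; I would make sure Procedure~\ref{proc1} performs exactly these adjustments (edge reversal, weight complementation, and restriction to the vertex set $V\setminus A$).

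The third step is correctness and feasibility. I would verify that a valid delegation rule output corresponds bijectively to a spanning arborescence on $V\setminus A$ rooted at $\triangledown$: each delegating voter $v$ gets a unique path to $\triangledown$ (confluence = tree structure), casting voters connect directly to $\triangledown$, and abstainers are excluded. I would also check feasibility of the arborescence instance: if no spanning arborescence exists (some delegating voter cannot reach any casting voter even ignoring time), the instance has no confluent solution, and Edmonds' algorithm detects this. For the running time, Edmonds' algorithm runs in polynomial time (indeed $O(|E|\log|V|)$ with the Tarjan~\cite{tarjan1977finding} improvement), and the graph transformations (static collapse, edge reversal, weight complementation) are all linear in the input size, so the overall procedure is polynomial.

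I expect the main obstacle to be stating the reduction cleanly rather than any deep difficulty, since Edmonds' algorithm does the real work. The subtle points to get exactly right are: (i) the direction-reversal bookkeeping, ensuring the ``towards $\triangledown$'' requirement is faithfully translated into Edmonds' ``away from root'' convention; (ii) confirming that collapsing parallel temporal edges between the same pair of vertices to a single static edge keeps the \emph{maximum} available weight (since more utility is always better when time is ignored), so no feasible high-utility solution is lost; and (iii) the weight-complementation argument, which relies on every candidate arborescence having precisely $|V\setminus A|$ edges so that the additive constant $W\cdot|V\setminus A|$ is the same across all solutions and does not distort the optimum.
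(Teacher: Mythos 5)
Your proposal is correct and follows essentially the same route as the paper: collapse to the static variant, restrict to $V\cup\{\triangledown\}\setminus A$, reverse edge directions, convert maximization to minimization, keep only the best parallel edge, and run Edmonds' algorithm. The only cosmetic difference is that the paper negates weights ($w'(e)=-w(e)$) rather than complementing against a large constant $W$, which spares the (correct but unnecessary) observation that all spanning arborescences have the same number of edges.
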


\begin{algorithm}[h!]
\caption{A confluent and efficient utility maximizing delegation rule for input $G(V\cup\{\triangledown\},E,\tau,L,w,\delta)$.}\label{proc1}
\flushleft
{\tiny{1.}} $G:=$ static variant of $G$\\
{\tiny{2.}} $V':=V \cup \{\triangledown\} \setminus A $\\
{\tiny{3.}} $E':=\{(u,v):(v,u) \in E \wedge v,u\in V' \}$\\
{\tiny{4.}} \textbf{For} every edge $e'\in E'$:\\ 
{\tiny{5.}}\hspace{0.4cm} $w'(e'):=-w(e),$ where $e'\in E'$ corresponds to $e\in E$\\
{\tiny{6.}} remove duplicates from $E'$, retaining only the min-weight edge\\
{\tiny{7.}} let $G'$ be the (static) directed weighted graph $(V',E',w')$\\
{\tiny{8.}} $T':=$ outcome of Edmonds' algorithm with input ($G',\triangledown)$\\
{\tiny{9.}} \textbf{Return} the path from each $v\in D$ to $\triangledown$ inferred by $T'$
\end{algorithm}

We now shift our focus to efficient utility maximizing delegation rules that satisfy time-consciousness but are not necessarily confluent. 
Despite not necessarily resulting in a tree structure, such a rule should still suggest a precise path to a casting voter for every delegating voter $v$. Then, the utility of $v$ will be derived from her immediate representative (i.e., the weight of the first edge) in that path, regardless of whether other paths going through $v$ may exist for serving other voters who have delegated to $v$. The question of why non-confluent delegation rules merit investigation is discussed in \cite{brill2022liquid}. It was discovered that, among a large family of delegation rules, only non-confluent rules possessed the potential to satisfy the axiom of {\it copy-robustness}, an axiom that is also motivated by practical considerations \cite{behrens}.
Moreover, there are non-confluent rules with desirable properties that have been previously studied, such as the Depth-First-Delegation rule that precludes the possibility of Pareto-dominated delegations \cite{kotsialou2018incentivising}. Hence, it is not unprecedented to sacrifice confluence on the altar of attaining other desirable attributes. However, quite surprisingly, even in the absence of a requirement for a confluent rule, \textsc{resolve-delegation} remains NP-hard, as shown by the following theorem. Notably, the result holds even for simple scenarios that involve only a brief deliberation phase, uniform trust-horizon parameters across all voters and a lone delegating voter and it is orthogonal to the result of Theorem \ref{thm:np-hard} since it explicitly uses the fact that the considered elections are not of retrospective trust.



\begin{thm}
\label{2nd-hardness}\textsc{resolve-delegation} in a time-conscious manner is NP-hard, even for profiles with only a single delegating voter. 
\end{thm}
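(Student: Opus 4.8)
The plan is to isolate the single source of difficulty. With only one delegating voter $s$, confluence is automatically satisfied (a single $s$-to-$\triangledown$ path is trivially a tree), and since the utility of $s$ equals the weight of the first edge of its path, I would give all of $s$'s outgoing edges the same weight. Consequently, a solution of positive value exists if and only if \emph{some} $\delta$-time-conscious path from $s$ to $\triangledown$ exists at all; if none does, \textsc{resolve-delegation} has no feasible output whatsoever. Thus it suffices to prove that deciding the existence of a $\delta$-time-conscious $s$-to-$\triangledown$ path is NP-hard.

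The key observation driving the reduction is that a $\delta$-time-conscious path is precisely a \emph{restless} (waiting-time-bounded) temporal path read backwards in time. The condition $t_i \ge t_{i+1} \ge t_i - \delta_{v_i}^{t_i}$ states that the time-stamps are non-increasing along the path and that consecutive stamps differ by at most $\delta$; reversing the time axis ($t \mapsto L+1-t$) turns this into the standard restless-path constraint of non-decreasing stamps with per-vertex waiting bound $\delta$. I would highlight why the temporal constraint must be genuinely bounded: under retrospective trust ($\delta_v^t=t-1$) the waiting window is unbounded and only monotonicity of time survives, so a simple monotone path is found in polynomial time by a dynamic program over $(\text{vertex},\text{time})$ states; the same dynamic program solves the \emph{walk} version even for bounded $\delta$. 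Hardness can therefore only come from insisting that the path be \emph{simple} while the trust-horizon is bounded, which is exactly the regime the statement targets (and why it lies outside retrospective trust).

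Accordingly I would reduce from the NP-hard problem of deciding whether a \emph{simple} restless temporal path exists between two prescribed vertices under a uniform waiting bound, a hardness result from the temporal-graph literature motivated by contact tracing \cite{temporal-walks} (alternatively, a direct reduction from $3$-\textsc{sat} that encodes a consistent literal selection through the waiting window would serve the same purpose). Given such an instance with source $s'$, target $z'$, uniform bound $\Delta$ and lifetime $L'$, I keep one vertex per original vertex and one edge per original edge, preserving orientation since delegation flows in the same direction as the restless path, replace each label $t$ by $L'+1-t$, and set $\delta_v^t\equiv\Delta$ uniformly for every vertex and round. I then attach the sink gadget: an edge $(z,\triangledown)$ available at the final round, making $z$ the unique casting voter, and designate $s$ as the sole delegating voter, all remaining vertices being abstaining. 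Uniform edge weights complete the construction. Under the time reversal, simple restless paths $s'\leadsto z'$ correspond bijectively, preserving simplicity, to $\delta$-time-conscious paths $s\leadsto z\to\triangledown$; since the source hardness already holds for a small lifetime and a single waiting bound, the produced t-LD instance has a brief deliberation phase, uniform trust-horizons and a lone delegating voter, as required.

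I expect the main obstacle to be reconciling the reversal with the model's bookkeeping conventions rather than the combinatorial core. In particular, the t-LD model classifies a voter as abstaining precisely when it has no approval at the last round, so I must insert a one-round buffer (or dummy edges) to guarantee that every internal vertex keeps all its edges strictly before $L$, so that it is legitimately abstaining, while ensuring that $s$ still qualifies as delegating and that no buffer edge opens a spurious $\delta$-time-conscious shortcut to $\triangledown$. Verifying that the time-reversed map is a faithful, simplicity-preserving bijection between the two families of paths, and that the uniform-weight first edges of $s$ do not decouple feasibility from the restless-path answer, is the delicate part of the argument.
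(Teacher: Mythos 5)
Your proposal is correct and follows essentially the same route as the paper: a reduction from the NP-hard \textsc{restless temporal path} existence problem (the paper cites \cite{casteigts2019computational} for this hardness, not \cite{temporal-walks}), using a reversal to turn non-decreasing restless time-stamps into the non-increasing $\delta$-time-conscious ones, attaching a $\triangledown$-gadget at one endpoint and a dummy edge so that exactly one voter is delegating, and observing that with uniform first-edge weights feasibility coincides with positive utility. The only cosmetic difference is that you reverse the time labels ($t\mapsto L'+1-t$) while keeping edge orientations, whereas the paper keeps the labels and reverses the edge orientations (so its delegation path runs from the restless path's target to its source); both realizations of the time-reversal are equivalent.
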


Continuing with our study of efficient utility-maximizing delegation rules that are time-conscious but not necessarily confluent, we now turn to exploring potential workarounds to the impossibility result of Theorem \ref{2nd-hardness}. To overcome the computational intractability, we restrict ourselves to the still hard variant where the voters share the same time-horizon parameter and propose the following relaxations:
\begin{itemize}
    \item[(a)] Assuming retrospective trust profiles, i.e. $\delta_v^t=t-1$, for every voter $v$ and for every time-step $t$ such that $v$ approves to delegate her ballot at time $t$. These profiles are motivated by the fact that in real life, we do not expect voters to change their opinion in an arbitrary manner, and hence it is likely that a delegating voter trusts another voter for all the previous time instants before $t$. 
\item[(b)] Permitting walks instead of only paths, or in other words allowing for revisits to vertices, along a path from a delegating voter to a casting one. This enlarges the solution space and can be helpful towards achieving time-consciousness in certain instances, as it may be necessary to go through a cycle before being able to satisfy the time constraints. For an illustration, we refer to 
Appendix \ref{sec:appendix examples}. 
\end{itemize}

The approach of neglecting confluence, enables the development of local delegation rules, likewise the rules studied in \cite{caragiannis}, that make a decision for every voter completely independent of the choices
made for the rest of the electorate. For the two relaxations suggested in the previous discussion, we suggest a simple procedure that, in high level, visits every vertex $v$, corresponding to a delegating voter $v$, in a sequential manner, and for each such vertex, it detects a feasible, i.e. $\delta$-time-conscious, way to reach $\triangledown$, that uses the out-going edge of $v$ of maximum possible weight. The aforementioned way of reaching a casting voter can be computed by a suitable modification of the temporal analog of the Breadth-First search algorithm from \cite{mertzios2013temporal}, in the case where the input profile is of retrospective trust and by using the polynomial procedure that is based on Dijkstra's algorithm, from \cite{temporal-walks}, in the case where walks are allowed and all voters share the same trust-horizon parameter. 

Concerning the first relaxation, in \cite{mertzios2013temporal}, a polynomial-time algorithm was suggested to solve a (more general than what we need in our setting) problem, called \textsc{foremost path}. In this, we are given a (unweighted) directed temporal graph $G(V,E,\tau,L)$, a source vertex $v\in V$, a sink vertex $u\in V$, and a time-instant $t_{start}\in [L]$, and we are asked to compute\footnote{To be more precise, the goal is to select the path that minimizes the arrival time but for our purposes, this objective is superfluous (but harmless).} a time-respecting path from $v$ to $u$, that starts no sooner than $t_{start}$ (or report that such a path does not exist). Recall that, the definition of a time-respecting path has been provided in the proof of Theorem \ref{thm:np-hard}.

For the second relaxation, we give first the following definition: Given, a temporal graph $G(V,E,\tau,L,\delta)$ in which all entries of the vector $\delta$ coincide with a fixed value $\Delta$, a temporal walk $p$ of $G$ of length $\ell$, say $p=(v_{i-1},(e_i,t_i),v_i)_{i\in [\ell]}$ such that $v_i$'s are not necessarily all pairwise distinct,
is called $\Delta$-restless if for every $i\in [\ell]$ it holds that $e_i =
(v_{i-1,} v_i)$ and that $t_i \in \tau(e_i)$ and for $i\in [\ell-1]$ it holds that $t_{i} \leq t_{i+1} \leq
t_{i} + \Delta$. To solve efficiently the relaxation of \textsc{resolve-delegation} in a time-conscious manner, when walks are allowed, we will utilize the procedure from \cite{temporal-walks}, that outputs\footnote{Once again, the problem studied in \cite{temporal-walks} is more general than the problem we need to consider here, both in terms of the input graph and the optimization objective(s), but it can be easily adapted to meet our requirements.} a $\Delta$-restless temporal walk between two specified vertices, for any fixed parameter $\Delta$. For compactness, we provide a unified presentation of the positive results, under Procedure \ref{proc2}, which handles both relaxations. In the statement and analysis of this procedure, we will use the term \textit{journey} to refer either to a path when dealing with the first relaxation or to a walk when discussing the second relaxation.

\begin{thm}
Procedure \ref{proc2} solves \textsc{resolve-delegation} in a time-conscious manner, in polynomial time, for profiles of retrospective trust. Moreover, the same holds for the variant of the problem where walks are allowed, for profiles in which there is a common, fixed trust-horizon parameter, for all voters and all time-steps.
\end{thm}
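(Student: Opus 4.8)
The plan is to establish correctness of Procedure 2 by treating both relaxations in a unified way through the notion of a journey (a $\delta$-time-conscious path in the retrospective-trust case, a $\Delta$-restless walk in the common-horizon case). I would structure the argument around three pillars: feasibility (every journey the procedure returns respects the time-consciousness constraints and ends at $\triangledown$), optimality (the procedure picks, for each delegating voter, the first edge of maximum possible weight among those extendable to a valid journey), and efficiency (the whole computation runs in polynomial time). Since confluence is not required, the key observation I would exploit is that the problem decomposes: the utility of a delegating voter $v$ depends only on the weight of the first edge of $v$'s output path, and different voters' paths may overlap freely, so the rule can be made \emph{local} and each voter $v$ can be handled independently of the choices made for the rest of the electorate.

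\textbf{First I would} handle the retrospective-trust case. Here the constraint $\delta_v^t = t-1$ means that for a $\delta$-time-conscious path the condition $t_i \geq t_{i+1} \geq t_i - \delta_{v_i}^{t_i}$ collapses to simply $t_i \geq t_{i+1}$, i.e. the time-stamps are non-increasing along the path away from $v$. Reading such a path from $\triangledown$ back towards $v$, the time-stamps are non-decreasing, which is exactly a \emph{time-respecting} path in the reversed graph. The plan is therefore to reverse all edges of $G$ and invoke the \textsc{foremost path} algorithm of \cite{mertzios2013temporal} from $\triangledown$: for each delegating voter $v$, I would determine, for each candidate first edge $e = (v,u)$ with time-stamp $t \in \tau(e)$, whether there is a time-respecting path in the reversed graph from $\triangledown$ to $u$ arriving no later than $t$ (equivalently, a $\delta$-time-conscious continuation from $u$ to $\triangledown$ using only time-stamps at most $t$), and then select the edge $e$ of maximum weight for which such a continuation exists. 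Correctness of \textsc{foremost path} (which I may assume from the excerpt) guarantees the continuation is found whenever one exists, and the maximum-weight choice among extendable first edges is exactly the optimum by the locality argument above.

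\textbf{For the second relaxation} the argument runs in parallel but with $\Delta$-restless walks in place of time-conscious paths. When all trust-horizon parameters equal a fixed $\Delta$, the defining inequality $t_i \leq t_{i+1} \leq t_i + \Delta$ of a $\Delta$-restless walk coincides (after the same edge-reversal and time-reversal bookkeeping) with the time-consciousness constraint, so I would appeal to the procedure of \cite{temporal-walks} based on Dijkstra's algorithm, which computes a $\Delta$-restless walk between two specified vertices in polynomial time for any fixed $\Delta$. The only subtlety relative to the first case is that allowing revisits is precisely what makes the walk exist in instances where no simple path would satisfy the constraints, so I would emphasize that permitting walks enlarges the feasible set without changing the objective, since utility still depends only on the first edge's weight. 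Finally, for efficiency I would note that there are at most $n-1$ delegating voters, each with polynomially many candidate first edges (one per temporal edge incident to $v$), and each extendability check is a single polynomial-time call to the respective subroutine, so the overall running time is polynomial in the input size.

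\textbf{The hard part will be} the careful bookkeeping that translates between the ``non-increasing time-stamps away from $v$'' orientation of $\delta$-time-conscious paths and the ``non-decreasing time-stamps'' orientation required by the foremost-path and restless-walk subroutines, together with verifying that the first-edge time-stamp $t$ is correctly propagated as the latest admissible arrival time for the continuation. I would also need to argue the locality claim rigorously, namely that choosing the maximum-weight extendable first edge for each voter independently yields the true optimum of \textsc{resolve-delegation} without confluence; this follows because the objective is a sum of per-voter first-edge weights and the feasibility of one voter's journey imposes no constraint on another's, but it deserves an explicit statement since it is the conceptual crux distinguishing this tractable setting from the NP-hard confluent case of Theorem~\ref{thm:np-hard}.
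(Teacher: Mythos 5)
Your proposal is correct and follows essentially the same route as the paper's proof: the locality argument (utility depends only on each delegating voter's first edge, and without confluence the voters' journeys impose no constraints on one another) justifying the greedy choice of the maximum-weight extendable first edge, followed by a reversal transformation that reduces the per-edge feasibility check to the foremost-path (resp.\ restless-walk) subroutine. The only cosmetic divergence is in the bookkeeping you yourself flag: you reverse the edge directions and ask for a latest-arrival time-respecting path from $\triangledown$, whereas Procedure~\ref{proc2} keeps the edge orientations and instead reverses the time labels via $t\mapsto L+1-t$, querying a foremost path from $v$ with a start-time constraint --- these are time-symmetric duals and both work.
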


\begin{algorithm}[h!]
\caption{A time-conscious and efficient utility maximizing delegation rule for input $G(V\cup\{\triangledown\},E,\tau,L,w,\delta)$ applicable for profiles of retrospective trust or profiles in which walks are allowed.}\label{proc2}
\flushleft
{\tiny{1.}} \textbf{For} every edge $e\in E$:\\
{\tiny{2.}}\hspace{0.3cm} replace $\tau(e)$ with $\{L+1-t: t\in \tau(e)\}$\\
{\tiny{3.}} \textbf{For} every vertex $v\in V\setminus A$:\\
{\tiny{4.}}\hspace{0.3cm} $E_v:=\{(v,u)\in E:u\in (V\cup\{\triangledown\})\}$\\
{\tiny{5.}}\hspace{0.3cm} \textbf{While} a journey from $v$ to $\triangledown$ hasn't been found and $|E_v|> 0$:\\
{\tiny{6.}}\hspace{0.6cm} \textbf{If} $(v,\triangledown)\in E_v$: pick $(v,\triangledown)$ as the path from $v$ to $\triangledown$, and exit \\
\hspace{0.82cm}the while loop\\
{\tiny{7.}}\hspace{0.6cm} $\tilde{e} := \texttt{argmax}\{w(e): e\in E_v\}$   \\
{\tiny{8.}}\hspace{0.6cm} \textbf{If} walks are allowed:\\
{\tiny{9.}}\hspace{0.9cm} $G':=(V\cup\{\triangledown\},(E\setminus E_v)\cup \{\tilde{e}\},\tau,L,\delta)$\\
{\tiny{10.}}\hspace{0.9cm} search for a $\Delta$-restless walk from $v$ to $\triangledown$ in $G'$, if it doesn't\\\hspace{1.2cm}exist, remove $\tilde{e}$ from further consideration\\
{\tiny{11.}}\hspace{0.6cm} \textbf{Else}:\\
{\tiny{12.}}\hspace{0.9cm} $G':= (V\cup\{\triangledown\},(E\setminus E_v)\cup \{\tilde{e}\},\tau,L)$ \\
{\tiny{13.}}\hspace{0.9cm} $t_{start}:=s_{\tilde{e}},$ where $\tau({\tilde{e}})=[s_{\tilde{e}},t_{\tilde{e}}]$ \\
{\tiny{14.}}\hspace{0.9cm} solve \textsc{foremost path} for $(G', v,\triangledown, t_{start})$, if a solution\\\hspace{1.2cm}is not found, remove $\tilde{e}$ from further consideration\\
{\tiny{15.}} \textbf{Return} the set of determined journeys 
\end{algorithm}

We conclude with studying the problem \textsc{resolve-delegation} in a time-conscious and confluent manner, but now without the requirement of computational efficiency. Clearly, if polynomial solvability is no longer a worry, a straightforward brute-force procedure, that in time exponential in the number of edges and in $L$ examines all possible trees, can be utilized to maximize the voters' satisfaction. However, our objective goes beyond this. First, we aim at developing a procedure that could be well-suited for scenarios where the deliberation phase is prolonged, being exponentially dependent in only one of its input parameters. Additionally, observing that the most definitive parameter of \textsc{resolve-delegation} is the number of delegating voters $|D|$ (upper bounded by $n$), we focus on designing an algorithm with a running time exponentially dependent only on $|D|$, which would be suitable for practical use in any relatively small community. Yet, this is not possible without further assumptions, given the negative result of Theorem \ref{2nd-hardness}, that holds even for a single delegating voter. As before, we resort to instances of t-LD elections of retrospective trust and obtain the following result.

\begin{thm}
    \textsc{resolve-delegation} in a time-conscious and confluent manner is solvable in time exponential in $|D|$ and polynomial in the remaining input parameters, for profiles of retrospective trust. 
\end{thm}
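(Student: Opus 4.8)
The plan is to exploit the fact that, for profiles of retrospective trust, the waiting-time lower bound in the definition of a $\delta$-time-conscious path degenerates. Setting $\delta_{v}^{t}=t-1$ turns the requirement $t_i\ge t_{i+1}\ge t_i-\delta_{v_i}^{t_i}$ into simply $t_i\ge t_{i+1}\ge 1$, so time-consciousness of a path toward $\triangledown$ is equivalent to its edge time-stamps being non-increasing as one moves toward the root. First I would therefore recast a feasible confluent and time-conscious solution purely combinatorially: it is (i) a rooted forest $F$ on the set $D$ of delegating voters in which every root attaches, through a single out-edge, to some casting voter of $C$ (which in turn reaches $\triangledown$), together with (ii) a choice, for every $v\in D$, of one temporal multi-edge $e_v$ out of $v$ and of a time $t_v\in\tau(e_v)$, such that $t_v\ge t_{p(v)}$ whenever the parent $p(v)$ lies in $D$, and $t_v\ge s_{c}$ whenever $v$ is a root attaching to a casting voter $c$ (so that $c$ may fix its own edge to $\triangledown$ at the minimal time $s_c$). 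Confluence is exactly the statement that $F$ together with the casting edges is a tree rooted at $\triangledown$ spanning $V\setminus A$, and acyclicity is automatic once $F$ is a forest. The objective to maximise is $\sum_{v\in D} w(e_v)$, since casting edges have weight $0$ and only first edges of delegating voters contribute.

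This reformulation suggests the following algorithm. I would enumerate all possible shapes of the forest $F$, i.e. for each $v\in D$ a choice of its parent among the other delegating voters or the marker ``exit''; there are at most $|D|^{|D|}=2^{O(|D|\log|D|)}$ such choices, a quantity exponential in $|D|$ but independent of $n$, $L$ and $|E|$, and I would discard the ones that create a cycle. For each acyclic shape $F$, the remaining task is to pick the edges $e_v$ and the times $t_v$ optimally, which I would solve by a polynomial-time dynamic program over $F$. Processing $F$ from the leaves upward, for every vertex $v$ I compute a table $g_v(t)$, defined as the maximum weight attainable in the subtree rooted at $v$ under the constraint $t_v\ge t$. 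For a non-root $v$ with fixed parent $u=p(v)\in D$, I set $g_v(t)=\max\{\,w(e)+\sum_{c\,\text{child of }v} g_c(t_v)\,\}$, the maximum ranging over multi-edges $e=(v,u)\in E$ and over times $t_v\in\tau(e)$ with $t_v\ge t$. Because each $g_c$ is non-increasing in its argument, the optimal $t_v$ for a fixed edge $e$ with $\tau(e)=[s_e,t_e]$ is always the minimal feasible one, $t_v=\max(t,s_e)$, so each entry is evaluated over the $O(|E|)$ edges and the children in polynomial time. At a root $r$ the value is $\max_{t_r\in[L]}\big(\hat h_r(t_r)+\sum_{c}g_c(t_r)\big)$, where $\hat h_r(t_r)$ is the best weight of an exit edge $(r,c)$ usable exactly at time $t_r$ with $t_r\ge s_c$, a quantity I precompute for all $r$ and $t_r$ in polynomial time. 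Summing the root values gives the optimum for the shape $F$, and the global optimum is the maximum over all shapes; the witnessing tree is recovered by standard back-pointers.

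Putting the pieces together yields total running time $2^{O(|D|\log|D|)}\cdot\mathrm{poly}(n,L,|E|)$, which is exponential in $|D|$ and polynomial in the remaining parameters, as claimed. (If one prefers a $2^{O(|D|)}$ dependence, the forest enumeration can be replaced by a subset dynamic program over $2^{D}$ that grows the delegation forest one vertex at a time; I would keep the forest-shape version for clarity.)

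The step I expect to be most delicate is the correctness of the dynamic program, specifically two points. First, one must justify that the choice of exit casting voter can be optimised locally and independently for each forest root: this relies on casting edges carrying weight $0$, on a casting voter being able to serve arbitrarily many delegators, and on fixing its edge time at the minimal value $s_c$ being dominant because $\triangledown$ imposes no constraint from above. Second, one must prove the monotonicity of $g_v$ in its lower-bound argument, which is what licenses the greedy choice $t_v=\max(t,s_e)$ and keeps the tables of size $O(L)$; this follows by a straightforward induction over $F$, but it is the linchpin that makes the inner optimisation polynomial rather than forcing an enumeration of all time vectors. Once these two claims are established, feasibility (confluence, time-consciousness, spanning $V\setminus A$, absence of cycles) and optimality follow directly from the reformulation, and care is only needed to confirm that the polynomial factor has degree independent of $|D|$.
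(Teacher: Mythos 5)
Your argument is correct, but it takes a genuinely different route from the paper's. Both proofs hinge on the same initial observation, namely that retrospective trust collapses the waiting-time constraint so that time-consciousness becomes mere non-increase of time-stamps toward $\triangledown$. From there, the paper reduces the problem to \textsc{directed minimum steiner tree} on a static auxiliary graph whose vertices are the pairs $(e,t)$ of temporal edges and admissible time-instants, with the delegating voters as terminals and with the weight of an edge out of $u$ replaced by $max(u)-w(e)+min(u)$ to convert maximization into minimization; it then invokes a Dreyfus--Wagner-type algorithm, yielding $O^*(3^{|D|})$. You instead enumerate the topology of the delegation forest on $D$ directly and, for each of the at most $|D|^{|D|}$ acyclic shapes, resolve the remaining edge- and time-selection by a bottom-up dynamic program; the two points you flag as delicate (optimising each root's exit casting voter $c$ locally at its earliest time $s_c$, which is sound because $(c,\triangledown)$ has weight zero and $\triangledown$ imposes no constraint from above, and the monotonicity of $g_v$ in its lower-bound argument) are exactly the right things to verify and both go through. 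The trade-off is that your method is self-contained and elementary (no Steiner-tree machinery, no weight flip), but its dependence on $|D|$ is $2^{O(|D|\log|D|)}$ rather than $2^{O(|D|)}$; this is still FPT in $|D|$ and polynomial in the remaining parameters, but if one reads ``exponential in $|D|$'' literally you need either your parenthetical subset-DP refinement (which, note, would have to carry the time of the attachment point in its state, bringing it close to the paper's construction) or the paper's Steiner-tree route. One small bookkeeping point: to satisfy confluence as defined (spanning $V\setminus A$), the output tree must also contain an edge $(c,\triangledown)$ for every casting voter $c$, including those receiving no delegations; this costs nothing but should be stated.
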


\section{Conclusions}

Succinctly speaking, the main attributes of Liquid Democracy are the (i) voters' ability to cast a ballot, (ii) ability to delegate voting rights, (iii) transitivity of delegations, (iv) ability for topic-specific delegations, (v) ability to modify or recall a delegation. Our work is the first, upon our knowledge, in the Computational Social Choice literature, that studies a model that satisfies every each of the above features. 
Motivated by the suggestion of \cite{Nitsche}, on the addition of a temporal dimension in the algorithmic considerations of LD models, and building upon \cite{brill2022liquid}, 
we studied a LD framework from a viewpoint that lies in the middle ground between algorithmic and axiomatic approaches. We intentionally gave significant emphasis on developing a general model for incorporating temporal aspects and we feel it opens up the way for several promising avenues for future research. The first is to examine whether time-consciousness (or other time-related axioms) is compatible with established axioms for LD frameworks. Also an intriguing topic is to identify further realistic families of instances for which all properties studied here can be simultaneously satisfied. It would also be interesting to check whether the positive results we present still apply for further generalizations of t-LD elections, e.g. when voters are able to use a more powerful language
to express complex preferences. The delegating voters' preferences over the final representatives and the casting voters' preferences over the issue at hand, as well as an egalitarian objective or restrictions on the maximum in-degree or on the maximum path-length in the output of a delegation rule, deserve further examination.
We finally suggest experimental or empirical evaluations of LD frameworks that take into account temporal considerations.
\newpage
\bibliography{ecai}

\clearpage
\setcounter{page}{1}

\appendix

\noindent{\underline{\Large Technical Appendix \hspace{5.2cm}}}

\section{Missing Proofs From Section 3}
\label{app:proofs}

\subsection{Proof of Theorem 1}
The special cases for which the hardness holds, stated in the statement of the theorem, simply follow by the construction. We make the following observations regarding the deliberation phase of the t-LD elections represented by $G$.

\begin{itemize}
    \item[-] The only vertex that has an out-going edge to $\triangledown$ is $u_0$. Such an edge is available at the final time-instant $L$ and thus the voter that corresponds to $u_0$ agrees to cast a ballot till the end of the election; which makes her the only casting voter. 
        \item[-] The only vertex that doesn't have an out-going edge at time $L$ 
        is $a$. More precisely, $a$ has no out-going edges during $[1,L]$ and thus the corresponding voter opts to abstain from the beginning until the end of the election; which makes her the only abstaining voter.
        \item[-] The weights of the out-going edges of every vertex $v$ of $V\setminus \{\triangledown,u_0\}$ at any time-instant $t$, indeed express a weak ranking over the voters that are being approved by $v$ at time $t$, due to the dummy edges. 
\end{itemize}

   Before continuing, we observe that vertex $a$ as well as the edges towards $a$ do not affect the rest of the reduction since such edges do not belong to any path to $\triangledown$, and are not part of any feasible solution of $\Pi$. Hence, it is safe to focus on the subgraph of $G$ induced by $V\cup \{\triangledown\} \setminus \{a\},$ which, for simplicity, will be called $G(V\cup\{\triangledown\},E,\tau,L,w,\delta)$, from now on.

For the forward direction, say that $(G',u_0',k')$ is a \texttt{YES}-instance of t-\textsc{mst} having $T'$ as a certificate. We will prove that $(G,k)$ is also a \texttt{YES}-instance of $\Pi$.  We select an arbitrary path $p'$ of $T'$ that has $u_0'$ as its source vertex, and we rename its vertices and edges so as $p' = (u_{i-1}',(e_{i}',t_{i}),u_{i}')_{i \in [q]},$  for some $q\in \{1,2,\dots,n\}$.
Since $T'$ is a subgraph of $G'$, the existence in $p'$ of the edge $e_i'=(u_{i-1}',u_i')$, for $i\in [q]$ and for which $t_i \in \tau(e_i')$ implies the existence of an edge $e_i=(u_i,u_{i-1})$ in $G$ that is present at time $t_i$. 
Combining these edges we prove the existence of a path $p=(u_i,(e_i,t_i),u_{i-1})_{i\in [q]}$, in $G$.
Since $p'$ is time-respecting, it holds that $1\leq t_{i-1}\leq t_i\leq L, i\in \{2,3,\dots q\},$
and hence, given that the elections represented by $G$ are of retrospective trust, $p'$ is a $\delta$-time-conscious path. Finally, the path $p\cup (((u_0,\triangledown),1),\triangledown)$ is a $\delta$-time-conscious path from $u_q$ to $\triangledown$. Combining such paths for every vertex $u_q\in V\setminus\{u_0\}$, we can create a subgraph $T$ of $G$ that is a $\delta$-time-conscious tree rooted at $\triangledown$ that spans the vertices of $V\cup\{\triangledown\}$. 

We now focus on the cost of the edges in $T$. The cost of all edges of $T'$ is a sum of values $w'(e_i')\in \{1,2\}$ and the number of edges in $T'$ are exactly $n-1$, where $n=|V'|=|V|$. Lets call $d_{T'}$ the number of edges of weight $2$ in $T'$ then $n-1+d_{T'}\leq k'.$ Furthermore, it holds that for every edge of weight $2$ (resp. 1) in $T'$ there is an edge of weight $1$ (resp. 2) in $T$ and vice versa. Given that $w(u_0,\triangledown)=0,$ the total weight of edges of $T$ is
\begin{align*}
n-1+(n-1-d_{T'})=2(n-1)+(n-1)-k' \geq k.
\end{align*} 

For the reverse direction suppose that there is a directed temporal tree $T$ that verifies a \texttt{YES}-solution of $\Pi$ and say that $E_T$ is its edge set. Since $T$ is rooted at $\triangledown$ and the only edge incident to it is $e=(u_0,\triangledown),$ then $e$ is definitely part of $T$. Consider the graph $T'$ that corresponds to the subgraph of $G'$ that contains $e$ as well as an edge $(u',v')$ if and only if $(v,u)$ belongs to $E_T\setminus\{e\}$. The fact that $T'$ is a time-respecting directed temporal tree that spans the vertices of $G'$ and has a path from $u_0'$ to every vertex of $G'$, follows by similar arguments to the forward direction of the proof. 

We now need to prove that the total weight of the edges of $T'$ is at most $k'$. It is known that the total weight of the edges of $T$ is at least $k=3(n-1)-k'$. Lets call $d_T$ the number of edges of weight $2$ in $T$, then $(n-1)+d_T\geq 3(n-1)-k'.$ Since every edge of weight $2$ (resp. $1$) of $G'$ corresponds to an edge of weight $1$ (resp. $2$) in $G$ and vice versa, the total weight of the edges in $T'$ equals $$(n-1)+(n-1-d_T) \leq 2(n-1)+(n-1-3(n-1)+k') = k',$$ and this concludes the NP-hardness proof.\hfill$\qed$

\subsection{Proof of Theorem 2}
It is immediate that the complexity of Procedure \ref{proc1} is polynomial in the input size. The fact that the output of Procedure \ref{proc1} is a directed temporal tree that spans the vertices of $(V\cup \{\triangledown\}) \setminus A$ follows from the fact that Edmonds' algorithm returns a tree that spans the set of vertices of $G'$, which equals $(V\cup \{\triangledown\})\setminus A$. Additionally, the result of the run of Edmonds' algorithm is a graph in which every vertex of $G'$ is reachable from $\triangledown$. Given that every edge of $G'$ has a corresponding edge in $G$ of reverse orientation, it holds that the output of Procedure \ref{proc1} contains a path to $\triangledown$ from every vertex of $V\setminus A$. Note also that Procedure \ref{proc1} is a valid delegation rule since it assures that abstaining voters will not be asked to vote or to delegate their rights, that delegating voters will not be asked to delegate to an abstaining one and that casting voters will be asked to cast a ballot. The proof of these claims is straightforward.

We finally need to argue about the optimality of the algorithm concerning the maximization of electorate's satisfaction. This follows by the fact that any solution by Edmonds for $(G',\triangledown)$ corresponds to a feasible solution of \textsc{resolve-delegation}, with the same cost, and vice versa.\hfill$\qed$

\subsection{Proof of Theorem 3}

Given a graph $G(V\cup \{\triangledown\},E,\tau,L,w,\delta)$ that models the deliberation phase of a t-LD election and a parameter $k$, we call $\Pi$ the decision variant of the problem \textsc{resolve-delegation} in a time-conscious manner, which asks for the existence of a solution with total satisfaction 
at least $k$.
At what follows, we provide a reduction to $\Pi$, from a problem called \textsc{restless temporal path}, that was shown to be NP-hard\footnote{The NP-hardness result from \cite{casteigts2019computational} pertains to the undirected variant of the problem. Nevertheless, an analogous result can be shown for the directed variant by a straightforward transformation.}
in \cite{casteigts2019computational}. 

In the \textsc{restless temporal path} we are given, a temporal graph $G'(V',E',\tau',L')$, two distinct vertices $x, y$ of $V'$ and an
integer $\Delta$. The question is to determine whether a $\Delta$-restless path from $x$ to $y$ exists in $G'$, where a temporal path $p=(v'_{i-1},(e_i',t_i),v'_i)_{i\in [\ell]}$
is called $\Delta$-restless if for all $i\in [\ell]$, it holds that $e_i' =
(v'_{i-1}, v'_{i})$, $t_i \in \tau'(e_i')$, and for all $i\in [\ell-1]$ we have $t_{i} \leq t_{i+1} \leq
t_{i} + \Delta$ (in case $t_1+\Delta > L$, the rightmost inequality is trivially satisfied); an example of such a path is provided in Appendix \ref{sec:appendix examples}. The hardness holds even for temporal graphs with a lifespan of $3$ time instants and even if $\Delta=1$.

Given an instance $(G'(V',E',\tau',L'),x,y,\Delta)$ of \textsc{restless temporal path} we create an instance $(G(V\cup\{\triangledown\},E,\tau,L,w,\delta),k)$ of $\Pi$ as follows: 
\begin{itemize}
\item let $L=L'+1$,
    \item let $G$ be the reversed graph of $G'$, i.e. $V=V'$ and for every edge $e'=(v,u)$ of $E'$ we add an edge $e=(u,v)$ in $E$ such that $\tau(e)=\tau'(e')$ and $w(e)=1$,
    \item we add in the vertex set of $G$ a special vertex $\triangledown$ and in $E$ an edge $(x,\triangledown)$ with $w((x,\triangledown))=0$ and $\tau((x,\triangledown))=[1,L]$, 
    \item for every edge $(v,u)$ and for every time-instant $t\in [L]$ such that $v\neq x$ and $t\in \tau((v,u))$, we set $\delta_v^t=\min\{t-1,\Delta\}$,
    \item we add a dummy vertex $a$ with no out-going edges and an edge $(y,a)$ with $w((y,a))=1$ and $\tau((y,a))=[L,L]$,
    \item we set $k=1$.
\end{itemize}

We proceed with a few observations on the profile of t-LD elections that $G$ models. First, we have that at time $L$ all voters that correspond to vertices other than $x,y$ want to abstain, 
since there are no edges in $G'$ that are present at time $L'+1$. On the other hand, the only edge towards $\triangledown$ in $G$ has $x$ as its head, and therefore, $x$ corresponds to the only casting voter. Therefore, the only delegating voter in the created instance is $y$, since it corresponds to the only vertex that has an outgoing edge to a vertex other than $\triangledown$, at the final time-instant. Furthermore, since all approved representatives of $y$ are tied in the first place of her ranking, and given that the only edge towards $\triangledown$ is from $x$, asking for a time-conscious solution with a total utility of at least $1$ is equivalent to determining whether a time-conscious path from $y$ to $x$ exists.

For the forward direction, say that $p$ is a directed $\Delta$-restless path from $x$ to $y$ that verifies that $(G',x,y)$ is a \texttt{YES}-instance of the \textsc{restless temporal path} problem. Then, the edges of $G$ that correspond to edges of $p$ in $G'$ induce a path from $y$ to $x$ in $G$, since they are of reverse orientation, and consequently a path from $y$ to $\triangledown$. To prove that the path, called $p'$, from $y$ to $\triangledown$ is $\delta$-time-conscious, we focus on an arbitrary pair of consecutive edges of $p$, namely $((u,v),t'),((v,z),t'')$. Given that $p$ is $\Delta$-restless in $G'$, it holds that $t'\leq t''$ and that $t'' \leq t'+\Delta$. But then, by the construction of $G$, the path $p'$ includes the following two edges: $((z,v),t''),((v,u),t')$, for which $t''\geq t' \geq t''-\Delta = t''-\delta^{t''}_z$. By applying the same argument for each pair of consecutive edges of $p$, we conclude that $p'$ is indeed $\delta$-time-conscious.


For the reverse direction, say that $(G(V\cup\{\triangledown\},E,\tau,L,w,\delta),k)$ is a \texttt{YES}-instance of $\Pi$, or in other words that there exists 
a $\delta$-time-conscious path from $y$ to $\triangledown$. Given that the only edge towards $\triangledown$ is from $x$ and that it is present at any time-step of the deliberation phase, there also exists a $\delta$-time-conscious path, say $p$, from $y$ to $x$. In analogy to the forward direction, selecting an edge $(u,v)$ of $E'$ if and only if the edge $(v,u)$ of $E$ belongs to $p$, establishes a $\Delta$-restless path from $x$ to $y$ in $G'$, verifying that $(G'(V',E',\tau',L'),x,y)$ is a \texttt{YES}-instance of \textsc{restless temporal path}.\hfill$\qed$

\subsection{Proof of Theorem 4}   
It is immediate that the complexity of Procedure \ref{proc2} is polynomial in the input size and that the procedure returns a valid delegation rule as it does not ask abstaining voters to vote or delegate their rights, it does not ask a delegating voter to delegate to an abstainer, and it asks every casting voter to cast a ballot. Therefore, we only need to prove that the output is indeed a set of feasible $\delta$-time-conscious paths from all delegating voters to $\triangledown$, that maximizes the electorate's utility.

We begin by proving that the journeys
returned by Procedure \ref{proc2} indeed maximize the electorate's utility, assuming that the if and else blocks in lines 8 and 11 output a feasible solution, whenever there exists one. To that end, we first note that since it is sufficient to return a journey from a vertex $v$ to $\triangledown$ that is totally independent from the returned journey from every other vertex to $\triangledown$, the maximization of the utility of the electorate boils down to the problem of maximizing the utility of every individual voter. Consider now a delegating voter $v$.  
Since Procedure \ref{proc2} examines the out-going edges of $v$ in order of decreasing weight, and it stops at the very first time it finds a feasible solution for $v$, this is obviously a solution of maximum utility for $v$. 

To prove feasibility of the outcome, starting from the case where the input profiles are of retrospective trust, we claim that the replacement of the edge time-labels ensures that a time-respecting path from $v$ to $\triangledown$ in $G'$ corresponds to a $\delta$-time-conscious path from $v$ to $\triangledown$ in $G$. Say that two consecutive edges $e_1'$ and $e_2'$ are being used in such a time-respecting path, at time-instants $t_1'$ and $t_2'$ respectively. Then, $t_1'\leq t_2'$. Equivalently, there are two consecutive edges in $G$, say $e_1$ and $e_2$ that are present at time-instants $t_1$ and $t_2$ respectively. Since $t_1'\leq t_2'$, it also holds that $L+1-t_1\leq L+1-t_2$ and therefore, $t_1\geq t_2$. Hence $e_1$ can be placed before $e_2$ in a $\delta$-time-conscious path in $G$, provided that the input is of retrospective trust. By the optimality of the outcome of the algorithm for solving \textsc{foremost path}, from \cite{mertzios2013temporal}, one can deduce a feasible time-respecting path from $v$ to $\triangledown$ in $G'$, which in turn implies the existence of a $\delta$-time-conscious path from $v$ to $\triangledown$ in $G$.

Coming now to the case where walks are allowed, the feasibility follows by the algorithm suggested in \cite{temporal-walks} for determining whether a $\Delta$-restless walk between two specified vertices exist. We need again to use similar arguments to the ones used for the retrospective trust case, showing that the replacement of the edge time-labels in the first lines of Procedure \ref{proc2}, ensures that a $\Delta$-restless path from $v$ to $\triangledown$ in $G'$ corresponds to a $\delta$-time-conscious path from $v$ to $\triangledown$ in $G$.\hfill$\qed$

\subsection{Proof of Theorem 5} 
We suggest a procedure which consists of two components: a reduction from \textsc{resolve-delegation} in a time-conscious and confluent manner
to the \textsc{directed minimum steiner tree} (\textsc{d-mst})  problem and an execution of an appropriate algorithm for the latter. 
We begin with defining \textsc{d-mst}. In that, we are given a (static) directed edge-weighted graph $G'(V', E', w')$, where $w':E'\rightarrow \mathbb{N}$, a source $r'\in V'$, a set of vertices $\hat{V}\subseteq V'$ called terminals, and we are asked for a subgraph of $G'$ that includes a directed path from $r'$ to any (terminal) vertex of $\hat{V}$, of minimum possible total weight.

Given a graph $G(V\cup \{\triangledown\},E,\tau,L,w,\delta)$ that models the deliberation phase of a t-LD election of retrospective trust, we refer to $\Pi$ as the \textsc{resolve-delegation} problem in a time-conscious and confluent manner.
We will present a reduction from $\Pi$ to \textsc{d-mst}. Consider an instance of $\Pi$, say $G(V\cup\{\triangledown\},E,\tau,L,w,\delta)$. We will now construct an instance $(G'(V',E',w'), r', \hat{V})$ of \textsc{d-mst}, and an example of the proposed construction can be found in Figure \ref{fig:reduction}.
\begin{itemize}
\item we add a source $r'$ in $V'$, that corresponds to $\triangledown$,
        \item we add a (terminal) vertex $v'$ in $\hat{V}$, for every vertex $v$ of $D$, a (non-terminal) vertex $v'$ in $V'\setminus \hat{V}$, for every vertex $v$ of $C$ and we call all such vertices ``special'',
    \item for every edge $e=(u,v) \in E$ such that $u\in D$ and $v\in (V\cup\{\triangledown\})\setminus A$ and for every $t \in \tau(e)$, we add a (non-terminal) vertex named $(e,t)$ in $V'\setminus \hat{V}$,
    \item for every pair of edges $e_1=(u,v),e_2=(v,z)$ of $E$, such that $u,v,z \in (V\cup \{\triangledown\})\setminus A$ and for every $t_1\in \tau(e_1)$ and $t_2\in \tau(e_2)$ with $t_1\geq t_2$, we add in $E'$ a directed edge from $(e_2,t_2)$ to $(e_1,t_1)$ (provided that these vertices exist) of weight $max(u)-w(e_1)+min(u)$, where $max(u)$ (resp. $min(u)$) is the maximum (resp. minimum) weight of out-going edges of $u$, available at any time-instant, 
    \item for every edge $e=(v,\triangledown)\in E$ and for every $t\in \tau(e)$, we add in $E'$ a directed edge from $r'$ to $(e,t)$ of zero weight, 
    \item for every edge $e=(u,v)\in E$ such that $u \in D$ and $v\in (V\cup \{\triangledown\}) \setminus A$ and for every $t\in \tau(e)$, we add in $E'$ a directed edge from $(e,t)$ to $u'$ of zero weight.
    \end{itemize}

\begin{figure}[h!]
\label{fig:reduction}
\fbox{\includegraphics[scale=0.405
]
{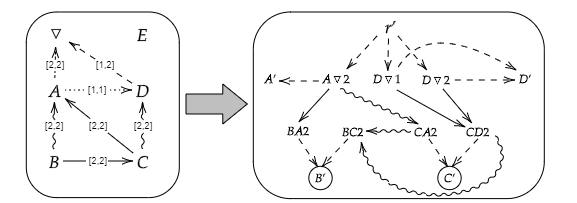}}
\centering \caption{The weighted directed temporal graph of an instance of problem $\Pi$ (left) that corresponds to a t-LD election with 1 abstaining voter $(E)$, 2 casting voters $(A,D)$ and 2 delegating voters $(B,C)$ and the weighted directed static graph of the corresponding instance of \textsc{d-mst} (right), where circled vertices indicate terminals. Dashed, curly and straight edges are of weights $0,1$ and $2$, respectively.}
\end{figure}

 At what follows we will prove that every feasible solution for an instance of $\Pi$, given by a graph $G$ of total utility at least $k$, implies the existence of a feasible solution for \textsc{d-mst} in $(G',r',\hat{V})$ of cost at most $k'$, and vice versa, where $k'=\sum_{u\in D}(max(u)+min(u))-k$. We start with the forward direction. We need to show that given a $\delta$-time-conscious temporal tree of $G$, called $T$, which spans the vertices of $V\setminus A$, is rooted at $\triangledown$ and its total weight is at least $k$, one can deduce a feasible solution $T'$ of weight at most $k'$ in the created instance of \textsc{d-mst}. 
Obviously $T$ induces a path from every vertex $v\in V\setminus A$ to $\triangledown$, which is of unit length if $v\in C$, and of length greater than $1$, otherwise. For such a path of unit length, there is also a directed path from $r'$ to $v'$ in $G'$, that is of zero cost, which is formed by the edges $(r',((v',\triangledown),$L$))$ and $(((v',\triangledown), $L$),v')$. Hence, the non-terminal vertex that corresponds to a casting voter, can be reached from $r'$ with no cost, using exactly 2 edges. Let all such pairs of edges belong to $T'$. 

We move on to temporal paths of length greater than 1 in $T$. We select an arbitrary path $p$ of $T$ having $\triangledown$ as its sink vertex and a vertex $u_0$, that corresponds to a delegating voter, as its source, and we rename its vertices and edges so as $u_q=\triangledown$ 
and $p=(u_{i-1},(e_i,t_i),u_i)_{i\in[q]},$ for some $q\in [n]$.
Since $p$ is a $\delta$-time-conscious path, for $i\in [q]$, it holds that $e_i=(u_{i-1},u_i)$, $t_i\in \tau(e_i)$ and, furthermore, for $i\in [q-1]$ it holds that $t_i \geq t_{i+1}$. By the construction, there also exists a path $p'$ in $G'$ that can be expressed by the sequence of vertices $\{r',((u'_{q-1},u'_q),t_q),((u'_{q-2},u'_{q-1}),t_{q-1}),\ldots,((u'_1,u'_2),t_2),((u'_0,u'_1),t_1),u'_0\}$.
All in all we have constructed a directed path from $r'$ to $u'_0$, and $u'_0$ is a terminal vertex since $u_0$ corresponds to a delegating voter. By repeating the same procedure for every path $p$ of $T$ that has $\triangledown$ as its source, 
we can form a subgraph of $G'$, say $T'$, that consists of paths from $r'$ to every terminal vertex of $\hat{V}$. 

 It remains to be proven that $T'$ is a tree of cost no more than $k'$.
 Firstly, suppose that the undirected variant of $T'$ contains a cycle. By a similar reasoning to the above, we can prove that the undirected variant of $T$ would also contain a cycle, which is a contradiction. 
We will now prove that if the total weight of the edges in $T$ is at least $k$ then the weight of the edges in $T'$ is at most $k'$. Observe that for every vertex $v\in V\setminus A$, there is only one other vertex $u\in (V\cup \{\triangledown\}) \setminus A$, such that $(v,u)$ in $T$ (and similarly for $T'$) and say that, for convenience, the weight of an edge $(v,u)$ that belongs to $T$ is denoted by $w(v)=w(v,u)$. Then, the total weight of the edges in $T$ is
$\sum_{v\in D}w(v)\geq k.$ Let us now focus on a pair of paths of $T$ and $T'$, namely $p$ and $p'$  respectively. Say that $p=(u_{i-1},(e_i,t_i),u_i)_{i\in [q]}$ and that $p'$ is being formed by following the sequence of vertices $
\{((u'_{q-1},u'_q),t_q),((u'_{q-2},u'_{q-1}),t_{q-1}),\dots,((u'_1,u'_2),t_2),((u'_0,u'_1),t_1)\}$ in $G'$. The weight of all edges in $p$ equals $w(e_1)+w(e_2)+\dots+w(e_{q-1})+w(e_{q})= w(u_0)+w(u_1)+\dots+w(u_{q-2})+w(u_{q-1})$. On the other hand, the weight of all edges in $p'$ equals $max(u_{q-1})-w(u_{q-1})+min(u_{q-1})+max(u_{q-2})-w(u_{q-2})+min(u_{q-2})+\dots+max(u_0)-w(u_0)+min(u_0)$. Hence, for every such pair of paths, if $w_p$ is the total weight of the path $p$ (similarly for $p'$) and if $N_p$ is the set of non-sink vertices of path $p$, i.e. $N_p=\{u_0,u_1,\dots,u_{q-1}\}$, then $$w_{p'}=\sum_{u\in N_p}(max(u)+min(u))-w_p.$$

Before continuing, we note that any rooted directed (towards the root) tree with $\ell$ leaves can be divided into paths $P_1,P_2,\dots,P_{\ell}$, such that every path has a leaf as its source and every vertex (other than the root) belongs to exactly one path as a non-sink vertex. These paths can be created with the following procedure: Initially say that only the root of the tree belongs to a set $X$. Select as $P_1$ any path from a leaf to the root and say that from now on, $X$ also contains the vertices of $P_1$. For $i=2,$ select the path that does not use any vertex of $X$ as a non-sink vertex, starting from an unexplored leaf and ending at a vertex of $X$ and call it $P_i$; add the vertices of $P_i$ to $X$ and repeat for $i=3,\dots,\ell$.

Therefore, $T$ can be represented as a collection of, say $\ell(T)$, 
such paths in a way that each of its vertices (other than the root) appears in that collection as a non-sink vertex exactly once. If we call $w_{P_i}$ the total weight of edges in $P_i$, then the cost of $T$ can be expressed as $\sum_{i\in [\ell(T)]}w_{P_i}$, and thus, the cost of $T'$ can be expressed as
\begin{align*}
    \sum_{i\in [\ell(T)]}&\big(\sum_{u\in N_{P_i}}(max(u)+min(u))-w_{P_i}\big) \leq\\ \sum_{u \in D}&(max(u)+min(u))-k=k'
\end{align*}



For the reverse direction, we firstly note that it is without loss of generality to assume that any optimal \textsc{d-mst} of $G'$ contains a path from $\triangledown$ to any special (terminal or non-terminal) vertex, since every special non-terminal vertex $v'$ can be reached from $r'$ at no cost by following the edges $(r',((v',\triangledown),L))$ and $(((v',\triangledown),L),v')$, which definitely exist. The rest of the proof follows from the same arguments presented in the forward direction.

Given an instance $(G'(V',E',w'),r',\hat{V},k')$, the \textsc{d-mst} can be solved in time $O^*(3^{|\hat{V}|})$ by a modification of the classical algorithm for the (undirected) \textsc{minimum steiner tree} problem \cite{dreyfus1971steiner}, where the $O^*$ notation denotes the suppression of factors polynomial in the input size. Further improvements on the running time have also been suggested, as outlined in greater detail in \cite{jones2013parameterized}. By our construction it holds that $|\hat{V}|=|D|$. Hence, \textsc{resolve-delegation} in a time-conscious and confluent manner is solvable in time exponential only in the number of delegating voters of the instance, for t-LD elections of retrospective trust. \hfill$\qed$

\section{Illustrative Examples}
\label{sec:appendix examples}

This section aims to clarify some of the technical terms and concepts used in our work, through simple illustrative examples. The graphs in Figure \ref{fig:ap_graphs} will be used as references throughout this section. For ease of presentation, 
 we have labeled each edge of the graphs in Figure \ref{fig:ap_graphs} with the corresponding time instant in which the edge is available, since in these particular examples, we have assumed that each edge is available for only one time instant. Specifically, we use the label $t$ to denote the time interval $[t,t]$.

\begin{itemize}
    \item {\bf Rooted directed trees.} In $G_1$, we observe a static variant of a temporal directed tree that cannot be said to be rooted at any of its vertices. However, by replacing the edge $(v_1,y_1)$, with its opposite direction, a directed tree rooted at $u_1$ is being produced. 
\item {\bf Time-conscious paths.} Consider $G_2$ to be a temporal graph with a lifespan of $5$, and say that $d_v^t=2$ for every vertex $v$ and for every time-instant $t\in {1,2,\dots,5}$. In this case, the only $\delta$-time-conscious path from $y_2$ to $u_2$ is $(y_2,x_2,w_2,u_2)$. On the other hand, if $d_{y_2}^t=4$ for every time-instant $t\in {1,2,\dots,5}$, then all paths from $y_2$ to $u_2$ except $(y_2,w_2,u_2)$ are $\delta$-time-conscious. 
\item {\bf Confluent vs non-confluent rules.} Suppose $G_3$ represents the unweighted variant of the static graph that arises from a temporal graph that models a t-LD election in which $x_3$ and $y_3$ are casting voters, $w_3$ is an abstaining voter, and $x_3$ is the most preferred representative of $v_3$. Any confluent voting rule should output either $(v_3,x_3)$ or $(v_3,y_3)$, but not both, and should not output $(w_3,y_3)$. However, if the path that is formed by the edges $(u_3,v_3)$ and $(v_3,x_3)$ is an infeasible option (perhaps due to the time-horizon parameters of the voters), a non-confluent voting rule might propose $x_3$ as the representative for $v_3$ and $y_3$ as the representative for $u_3$, via any of the available paths. 
\item {\bf Restless paths.} The notion of $\Delta$-restless paths has been used in the proof of Theorem \ref{2nd-hardness}. In $G_4$, there are two possible paths from $u_4$ to $x_4$, but only the path that goes through $w_4$ is 2-restless. 
\item {\bf Time-respecting paths.} This concept has been crucially utilized in the proof of Theorem \ref{thm:np-hard}, as well as in Procedure \ref{proc2}. In $G_5$, there are three paths of length 2, but only the one that uses the edges $(u_5,v_5)$ and $(v_5,x_5)$ is time-respecting. 
\item {\bf Restless walks vs paths.} The importance of allowing walks instead of paths in the search for restless journeys becomes evident when examining the graph $G_6$. Specifically, the absence of an $1$-restless path from $u_6$ to $w_6$ contrasts with the existence of an $1$-restless walk, by utilizing the cycle between $v_6, x_6, y_6$, highlighting the potential for creating more feasible options with the latter approach.
\end{itemize}

\newpage 
\pagebreak

\begin{figure}[!ht]
\fbox{\includegraphics[scale=0.28
]
{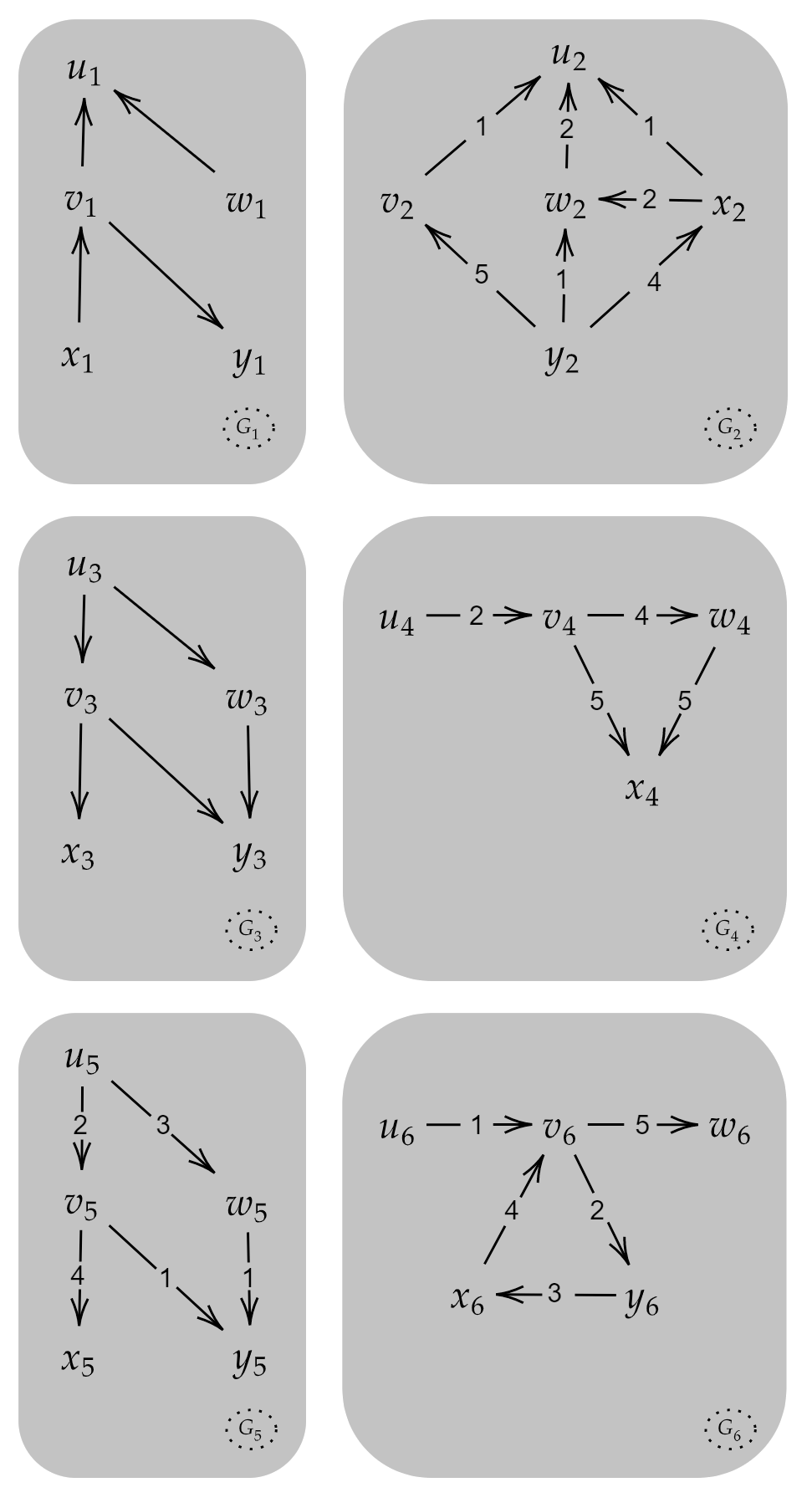}}
\centering 
\caption{Graphs that serve to demonstrate key concepts and terminology used in our work. In the order presented, they correspond to the following concepts: rooted directed trees ($G_1$), time-conscious paths ($G_2$), confluent delegation rules ($G_3$), restless paths ($G_4$), time-respecting paths ($G_5$), and restless walks ($G_6$). An edge labeled $t$ is available only during the time-interval $[t,t]$. }
\label{fig:ap_graphs}
\end{figure}

\end{document}